\newcommand{\rea}{\mathbb{R}}
\newcommand{\one}{\mathbf{1}}
\newcommand{\zero}{\mathbf{0}}
\newcommand\XX{\boldsymbol{\mathit{X}}}
\newcommand\zz{\boldsymbol{\mathit{z}}}
\newcommand\aaa{\boldsymbol{\mathit{a}}}
\newcommand\bb{\boldsymbol{\mathit{b}}}
\newcommand\cc{\boldsymbol{\mathit{c}}}
\newcommand\dd{\boldsymbol{\mathit{d}}}
\newcommand\ee{\boldsymbol{\mathit{e}}}
\newcommand\sss{\boldsymbol{\mathit{s}}}
\newcommand\rr{\boldsymbol{\mathit{r}}}
\newcommand\hh{\boldsymbol{\mathit{h}}}
\renewcommand\AA{\boldsymbol{\mathit{A}}}
\newcommand\BB{\boldsymbol{\mathit{B}}}
\newcommand\CC{\boldsymbol{\mathit{C}}}
\newcommand\DD{\boldsymbol{\mathit{D}}}
\newcommand\PP{\boldsymbol{\mathit{P}}}
\newcommand\II{\boldsymbol{\mathit{I}}}
\def\trace#1{\mathrm{Tr} \left(#1 \right)}
\newcommand{\Polarization}{P(\mathcal{G})}
\newcommand{\Disagreement}{D(\mathcal{G})}
\newcommand{\Internal}{I(\mathcal{G})}
\newcommand{\Controversy}{C(\mathcal{G})}
\def\calG{\mathcal{G}}
\newcommand\LL{\bm{\mathit{L}}}
\begin{document}

\title{Efficient Algorithms for Relevant Quantities of Friedkin-Johnsen Opinion Dynamics Model}

\author{Gengyu Wang}
\affiliation{%
  \department{Computation  and Artificial Intelligence Innovative College}
  \institution{Fudan University}
  \city{Shanghai}
  \country{China}
}
\email{24210240091@m.fudan.edu.cn}

\author{Runze Zhang}
\affiliation{%
  \department{Computation  and Artificial Intelligence Innovative College}
  \institution{Fudan University}
  \city{Shanghai}
  \country{China}
}
\email{24210240410@m.fudan.edu.cn}

\author{Zhongzhi Zhang}
\affiliation{%
  \department{Computation  and Artificial Intelligence Innovative College}
    \institution{Fudan University}
    \city{Shanghai}
    \country{China}
}
\email{zhangzz@fudan.edu.cn}
\authornote{Corresponding author.}

\begin{abstract}
    Online social networks have become an integral part of modern society, profoundly influencing how individuals form and exchange opinions across diverse domains ranging from politics to public health. The Friedkin-Johnsen model serves as a foundational framework for modeling opinion formation dynamics in such networks. In this paper, we address the computational task of efficiently determining the equilibrium opinion vector and associated metrics including polarization and disagreement, applicable to both directed and undirected social networks. We propose a deterministic local algorithm with relative error guarantees, scaling to networks exceeding ten million nodes. Further acceleration is achieved through integration with successive over-relaxation techniques, where a relaxation factor optimizes convergence rates. Extensive experiments on diverse real-world networks validate the practical effectiveness of our approaches, demonstrating significant improvements in computational efficiency and scalability compared to conventional methods.
\end{abstract}

\keywords{Opinion dynamics, social network, multi-agent system, graph algorithm, Laplacian solver}

\maketitle

\section{Introduction}
Social media platforms have become an indispensable part of modern life, profoundly transforming how individuals disseminate, exchange, and form opinions~\cite{anderson2019recent,dong2018survey,jia2015opinion}. Compared to traditional methods of information dissemination, this shift has introduced unprecedented complexity, particularly in the dynamics of information propagation, significantly influencing human behavior~\cite{Le20}. The pervasive impact of online interactions has sparked considerable academic interest in understanding the mechanisms governing opinion propagation and formation. To this end, researchers have developed a variety of mathematical models to describe opinion dynamics ~\cite{jia2015opinion, PrTe17, dong2018survey, anderson2019recent}, among which the Friedkin-Johnsen (FJ) model ~\cite{FrJo90} has emerged as one of the most widely applied frameworks. Notably, adaptations of the FJ model have been used to analyze complex negotiation processes, such as those leading to the Paris Agreement, demonstrating its utility in explaining consensus formation in multilateral international settings~\cite{BeWaVaHoShAl21}.

A central focus in opinion dynamics is the analysis of equilibrium opinions, which represent individuals' expressed opinions at steady state. These equilibrium values enable the study of opinion distributions and the quantification of key social phenomena. Various measures, including disagreement~\cite{dandekar2013biased, musco2018minimizing}, polarization~\cite{matakos2017measuring,dandekar2013biased, musco2018minimizing}, conflict~\cite{chen2018quantifying}, and controversy~\cite{chen2018quantifying}, have been developed to characterize these dynamics. 
As network sizes grow, efficiently computing equilibrium opinion vectors has become a significant challenge. 
Traditional matrix inversion methods, while exact, incur a time complexity of $O(n^3)$, where $n$ represents the number of nodes in the network. This renders them infeasible for large-scale networks. Existing approaches primarily rely on Laplacian solver~\cite{xu2021fast}, which achieve sublinear time complexity but are limited to undirected graphs. Recent work~\cite{neumann2024sublinear} proposes lazy random walk-based methods to estimate single-node opinions and related measures without computing the full equilibrium vector. However, these methods face challenges in scalability or error guarantees, limiting their broader applicability. Another method is based on forest sampling, which extends Wilson's algorithm to generate uniform spanning trees, thereby efficiently estimating equilibrium opinion vectors in directed graphs~\cite{sunOpinionOptimizationDirected2023} . Although this method demonstrates good scalability for large networks, its sampling process remains computationally expensive.

To address these challenges, we propose a novel local iteration-based algorithm that efficiently approximates the equilibrium opinion vector in the FJ model while guaranteeing relative error. Additionally, we introduce a method to enhance the algorithm's robustness when nodes have zero internal opinions. Furthermore, we incorporate successive over-relaxation (SOR) techniques to significantly accelerate the convergence process and improve computational efficiency. Unlike most algorithms that can only be applied to undirected graphs, our algorithms support both directed and undirected networks, while avoiding time-consuming sampling work in forest sampling algorithm. Experimental results demonstrate that our proposed algorithms exhibit superior performance and scalability on networks comprising tens of millions of nodes.
The main contributions of this paper are summarized as follows:
\begin{itemize}
    \item We introduce a robust local iteration algorithm that efficiently approximates the equilibrium opinion vector while guaranteeing relative error bounds.
    \item By incorporating SOR techniques, we optimize the algorithm with a relaxation parameter $\omega$, leading to faster convergence and improved computational efficiency.
    \item We conduct extensive numerical experiments on real-world network datasets. The results validate the advantages of our algorithms in terms of high efficiency and strong scalability.
\end{itemize}

The remainder of this paper is organized as follows: Section~\ref{sec:related} reviews related work on opinion dynamics and the FJ model; Section~\ref{sec:prelim} introduces notation and preliminaries necessary for understanding the subsequent sections; Section~\ref{sec:problem} provides an overview of existing algorithms and their limitations; Section~\ref{sec:localIter} details the local iteration algorithm and SOR optimization; Section~\ref{sec:experim} presents experimental results and performance comparisons; and Section~\ref{sec:conclusion} concludes the paper and summarizes the key findings.

\section{Related Work}\label{sec:related}
The FJ model, integrating internal opinions with social influence, stands as a pivotal framework in social dynamics research~\cite{FrJo90}. Stability conditions for this model were established in~\cite{ravazzi2014ergodic}, while equilibrium expressed opinion was derived in~\cite{BiKlOr15,das2013debiasing}. Interpretations of the FJ model were further explored in~\cite{BiKlOr15} and~\cite{ghaderi2014opinion}. Research has also delved into adversarial interventions~\cite{chen2021adversarial,gaitonde2020adversarial,tu2023adversaries} and the spread of viral content~\cite{tu2022viral}. Furthermore, the FJ model has been expanded in various dimensions, as discussed in~\cite{jia2015opinion,semonsen2018opinion}, including multidimensional approaches~\cite{friedkin2016network,parsegov2016novel}.

Beyond the foundational aspects, interpretations, and extensions of the FJ model, researchers have developed metrics to quantify disagreement~\cite{musco2018minimizing}, polarization~\cite{matakos2017measuring,musco2018minimizing}, conflict~\cite{chen2018quantifying}, and controversy~\cite{chen2018quantifying}. These metrics offer valuable insights into social dynamics. Various optimization problems based on the FJ model have been addressed, including efforts to minimize overall opinion~\cite{sunOpinionOptimizationDirected2023}, disagreement~\cite{musco2018minimizing}, polarization~\cite{musco2018minimizing, matakos2017measuring}, and conflict~\cite{chen2018quantifying,zhu2022nearly} by adjusting node attributes such as internal opinions or susceptibility to persuasion~\cite{abebe2018opinion, gionis2013opinion,chan2019revisiting}. Network modifications, like edge additions or deletions to reduce polarization or foster consensus, have emerged as a significant area of study~\cite{zhu2021minimizing,musco2018minimizing, chen2018quantifying}.

Scalability and efficiency pose significant challenges in large networks, where conventional methods like matrix inversion are impractical. To overcome this, scalable algorithms utilizing sampling and distributed computing have been introduced~\cite{ravazzi2014ergodic, avena2018two}, facilitating the analysis of networks with millions of nodes. For undirected graphs, Laplacian solvers can approximate equilibrium opinions in near-linear time~\cite{xu2021fast}, while directed graphs rely on forest sampling methods~\cite{sunOpinionOptimizationDirected2023}. Recent work~\cite{neumann2024sublinear} propose sublinear time algorithms to estimate single-node opinions without computing the entire equilibrium vector.

\section{Preliminaries}\label{sec:prelim}
This section is devoted to a brief introduction to some useful notations and tools, in order to facilitate the description of  problem formulation and  algorithms.

\subsection{Notations}
We denote scalars in $\rea$ by normal lowercase letters like $a,b,c$, sets by normal uppercase letters like $A,B,C$, vectors by bold lowercase letters like $\aaa, \bb, \cc$, and matrices by bold uppercase letters like $\AA, \BB, \CC$. We use $\one$ to denote  the vector  of appropriate dimensions with all entries being ones, and use $\ee_i$  to denote the $i^{\rm th}$ standard basis vector of appropriate dimension. Let $\aaa^\top$ and $\AA^\top$  denote, respectively, transpose of  vector $\aaa$ and matrix  $\AA$. Let $\trace \AA$ denote the trace of matrix $\AA$. We write $a_{i,j}$ to denote the entry at row $i$ and column $j$ of $\AA$ and we use $a_i$ to denote the $i^{\rm th}$ element of vector $\aaa$.  
Let $a_{\max}$, $a_{\min}$, $\bar{a}$ and $a_{\mathrm{sum}}$ denote the maximum element, minimum element, average value of vector 
$\aaa$ and the sum of all elements in vector $\aaa$, respectively.

\subsection{Graphs and Related Matrices}

Let $\calG = (V,E)$ be an unweighted simple undirected graph with $n = |V|$ nodes and  $m = |E|$ edges, where $V$ is the set of nodes and $E \subseteq V \times V$ is the set of edges. The adjacency relation of all nodes in $\calG$ is encoded in the adjacency matrix $\AA$,  whose entry $a_{i, j}=1$ if $i$ and $j$ are adjacent, and $a_{i, j}=0$ otherwise. For any given node $i$, $N_i$ denote the  set  of its neighbors, meaning $N_i =\{ j: (i,j)\in E\}$. The degree \(d_i\) of any node \(i\) is defined by \(d_i=\sum_{j=1}^n a_{ij}\). The degree diagonal matrix of $\calG$ is $\DD={\rm diag}(d_1, d_2, \cdots, d_n)$. The degree vector $\dd$ is given by $\dd = \DD\one$. The transition matrix of an unbiased random walk on graph $\calG$ is $\PP = \DD^{-1} \AA$, which is a row-stochastic matrix. The Laplacian matrix $\LL$ of $\calG$ is $\LL = \DD - \AA$, which is a symmetric semi-positive definite matrix. Matrix $\LL$ is singular, and thus it is irreversible. Moreover, for a nonnegative diagonal matrix $\XX$ with at least one nonzero diagonal entry, we have the fact that every element of $(\LL + \XX)^{-1}$ is positive~\cite{li2019current}.

\subsection{Friedkin-Johnsen Model}

The Friedkin-Johnsen (FJ) model~\cite{FrJo90} is a popular model  for opinion evolution and formation. For the FJ opinion model on a graph $\calG=(V,E)$, each node/agent  $i\in V$ is associated with two opinions: one is the internal opinion $s_i$, the other is the expressed opinion $z_i^{(t)}$ at time $t$. The internal opinion $s_i$ is in the closed interval $[0, 1]$, reflecting the intrinsic position of node $i $ on a certain topic, where 0 and 1 are polar opposites of opinions regarding the topic. A higher value of $s_i$ signifies that node $i$ is more favorable toward the topic, and vice versa. During the process of opinion evolution, the internal opinion $s_i$ remains constant, while the expressed opinion $z_i^{(t)}$ evolves at time $ t+1$ as
\begin{equation}\label{FJ}
z_i^{(t+1)} = \frac{s_i +\sum_{j\in N_i} a_{ij}z_j^{(t)}}{1+\sum_{j\in N_i} a_{ij}}.
\end{equation}
 Let $ \sss = (s_1,s_2,\ldots,s_n)^\top$ denote the vector of internal opinions, and let $ \zz^{(t)} = (z_1^{(t)},z_2^{(t)},\cdots,z_n^{(t)})^\top $ denote the vector of expressed opinions at time $ t $. 
\begin{lemma}[\cite{BiKlOr15}]\label{le-z}
	As $ t $ approaches infinity, $ \zz^{(t)} $ converges to an equilibrium vector  $ \zz = (z_1,z_2,\cdots,z_n)^\top$ satisfying $ \zz = (\II+\LL)^{-1}\sss $.
\end{lemma}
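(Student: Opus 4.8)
The plan is to recast the scalar recursion~\eqref{FJ} as an affine matrix iteration and then show it is a contraction whose unique fixed point is exactly $(\II+\LL)^{-1}\sss$. Since the graph is unweighted, $\sum_{j\in N_i}a_{ij}=d_i$, so~\eqref{FJ} reads $z_i^{(t+1)}=\kh{s_i+\sum_{j}a_{ij}z_j^{(t)}}/(1+d_i)$. Collecting these coordinatewise equations and using that $\II+\DD$ is the diagonal matrix with entries $1+d_i$, the update becomes
\begin{equation*}
\zz^{(t+1)} = (\II+\DD)^{-1}\AA\,\zz^{(t)} + (\II+\DD)^{-1}\sss.
\end{equation*}
Writing $\MM\defeq(\II+\DD)^{-1}\AA$ and $\cc\defeq(\II+\DD)^{-1}\sss$, this is the standard fixed-point iteration $\zz^{(t+1)}=\MM\zz^{(t)}+\cc$.

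First I would pin down the candidate limit. Any fixed point $\zz$ satisfies $(\II+\DD)\zz=\AA\zz+\sss$, i.e.\ $(\II+\DD-\AA)\zz=\sss$; since $\LL=\DD-\AA$ this is precisely $(\II+\LL)\zz=\sss$. To justify inverting $\II+\LL$, I would invoke that $\LL$ is symmetric positive semidefinite, so its eigenvalues are nonnegative and those of $\II+\LL$ are at least $1$; hence $\II+\LL$ is nonsingular (equivalently, one may cite the stated fact that $(\LL+\XX)^{-1}$ exists for any nonnegative diagonal $\XX$ with a nonzero entry, taking $\XX=\II$). This makes $\zz=(\II+\LL)^{-1}\sss$ the unique fixed point.

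The crux is to show the iteration actually converges to this point from the given data, which reduces to $\rho(\MM)<1$. I would establish this through the row-sum (infinity) norm: $\MM$ is entrywise nonnegative and its $i$-th row sums to $\sum_j a_{ij}/(1+d_i)=d_i/(1+d_i)<1$, so $\norm{\MM}_\infty=\max_i d_i/(1+d_i)<1$. Since the spectral radius is dominated by any induced norm, $\rho(\MM)\le\norm{\MM}_\infty<1$. Subtracting the fixed-point relation $\zz=\MM\zz+\cc$ from the iteration gives $\zz^{(t+1)}-\zz=\MM\kh{\zz^{(t)}-\zz}$, hence $\zz^{(t)}-\zz=\MM^{t}\kh{\zz^{(0)}-\zz}\to\zero$ as $t\to\infty$, proving $\zz^{(t)}\to\zz=(\II+\LL)^{-1}\sss$.

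The only genuine subtlety is this contraction step: one must rule out the degenerate case of an eigenvalue of $\MM$ on the unit circle. The strict inequality $d_i/(1+d_i)<1$ for every $i$ is what delivers the strict bound on $\norm{\MM}_\infty$, and it holds precisely because of the $+1$ contributed by each agent's own internal opinion. Without that term $\MM$ would collapse to the random-walk matrix $\PP=\DD^{-1}\AA$, whose spectral radius equals $1$ and for which convergence fails; everything else is routine linear algebra.
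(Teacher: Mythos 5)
Your proof is correct. Note that the paper does not actually prove this lemma---it is imported verbatim from \cite{BiKlOr15}---so there is no in-paper argument to compare against; your contraction argument supplies a complete, self-contained proof of the cited fact. Your route (rewrite the update as the affine iteration $\zz^{(t+1)}=\MM\zz^{(t)}+\cc$ with $\MM=(\II+\DD)^{-1}\AA$, identify the unique fixed point via invertibility of $\II+\LL$, then get convergence from $\rho(\MM)\le\norm{\MM}_\infty=\max_i d_i/(1+d_i)<1$) is the standard one, and it is consistent with what the paper implicitly relies on elsewhere: in the proof of Lemma~\ref{lem:walkex} the authors expand $\left(\II-\AA(\II+\DD)^{-1}\right)^{-1}$ as a Neumann series, whose convergence rests on exactly the spectral-radius fact you establish (the matrices $\AA(\II+\DD)^{-1}$ and $(\II+\DD)^{-1}\AA$ are similar, hence have the same spectrum); indeed, unrolling your iteration from $\zz^{(0)}=\zero$ reproduces the partial sums of that series applied to $\sss$. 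Two small strengths of your write-up worth keeping: the strict bound $d_i/(1+d_i)<1$ holds for every node, including isolated ones, so no connectivity assumption is needed (matching the generality of the lemma, which the paper states for an arbitrary graph), and your closing remark correctly isolates why the ``$+1$'' from the internal opinion is what breaks the degeneracy of the pure random-walk matrix $\PP$, whose spectral radius is $1$.
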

The matrix $(\II+\LL)^{-1}$ is the fundamental matrix of the FJ opinion dynamics model~\cite{gionis2013opinion}. It is a nonnegative and row-stochastic matrix, where each row sums to $1$. Moreover, the diagonal entry in each column is strictly larger than all other entries in that column~\cite{ChSh97,ChSh98}.

\subsection{Measures in FJ Model}

In the FJ model, opinions often fail to reach consensus, resulting in phenomena such as conflict, disagreement, polarization, and controversy. For a graph $\calG= (V,E)$, let $\tilde{\zz} = \zz - \frac{\zz^{\top} \textbf{1}}{n} \textbf{1}$ denote the mean-centered equilibrium vector. Table~\ref{tab:measures} summarizes key quantitative measures for these phenomena based on the FJ model.

\begin{table}[th]
\centering
\caption{Quantitative Measures in the FJ Model.}
\label{tab:measures}
\begin{tabular}{c c}
\toprule
\textbf{Measures} & \textbf{Definition} \\
\midrule
Internal Conflict & $\Internal =  \sum_{i\in V}\left(z_{i}-s_{i}\right)^{2}$ \\ 
Disagreement & $\Disagreement =   \sum_{(i,j) \in E} (z_i-z_j)^2$ \\ 
Polarization & $\Polarization =  \sum_{i \in V}\tilde{z} _i^2 =\tilde{\zz}^{\top} \tilde{\zz}$\\ 
Controversy & $\Controversy =  \sum_{i \in V}  z _i^2 =\zz^\top \zz$ \\ 
\bottomrule
\end{tabular}
\end{table}

Note that the disagreement $\Disagreement$ is also referred to as the external conflict of graph $\calG$ in~\cite{chen2018quantifying}. Additionally, the FJ model has been used to study the price of anarchy in societal opinion dynamics, where individuals minimize their stress, defined for node $i$ as $(z_i - s_i)^2 + \sum_{j \in N_i}(z_i-z_j)^2$. The total stress across all nodes is $\Internal + \Disagreement$, equivalent to the sum of internal and external conflict.

\section{Existing Algorithms}\label{sec:problem}
In this section, we introduce existing algorithms for computing the equilibrium opinion vector. Since direct matrix inversion is computationally infeasible due to its $O(n^3)$ complexity, alternative methods are necessary. Algorithms for directed graphs are discussed in Appendix~\ref{appdix:direct}. We present the comparison between algorithms in Table~\ref{tab:comp}.

\begin{table}[htb]
    \centering
    \caption{Comparison of Algorithms.}
    \label{tab:comp}
    \begin{tabular}{l c c}
    \toprule
    \textbf{Algorithms} & \textbf{Graph Type} & \textbf{Error Type}\\
    \midrule
    RWB~\cite{neumann2024sublinear} & undirected & absolute \\ 
    \textsc{Solver}~\cite{xu2021fast} & undirected & $l_2$ squared relative\\ 
    \textsc{Forest} & directed/undirected & absolute\\ 
    BLI/BLISOR (ours) & directed/undirected & relative\\ 
    \bottomrule
    \end{tabular}
\end{table}

\subsection{Algorithm Based on Random Walk}
    For the FJ model of opinion dynamics, \cite{neumann2024sublinear} introduces a sublinear-time algorithm to estimate the expressed opinion of a single node in large-scale networks. The algorithm called \textsc{RWB} uses lazy random walks starting from the target node and averages the innate opinions of visited nodes to approximate the equilibrium opinion \(z_{u}\) with an error guarantee of \(|\hat{z}_{u}-z_{u}|\leq\epsilon\). With a running time of \(O(\epsilon^{-2}r^{3}\log r)\), where \(r\) depends on the graph's condition number, the algorithm provides a practical solution for analyzing individual opinion formation in large social networks. 
    
    In practical applications, a fixed walk length and sample size are typically employed to strike a balance between computational efficiency and result accuracy. However, this fixed-parameter approach cannot provide rigorous error guarantees. For instance, as demonstrated in Appendix~\ref{appdix:exp_undi}, the algorithm's performance degrades significantly under power-law distributions of the internal opinions, exhibiting large error fluctuations and poor convergence.
    
\subsection{Algorithm Based on Laplacian Solver}

    The authors in~\cite{xu2021fast} proposed an approximation algorithm based on approximate Laplacian solver~\cite{kyng2016approximate}. By leveraging the positive semi-definiteness of $(\II + \LL)$, the algorithm avoids direct matrix inversion and instead approximates the solution with rigorous error guarantees. For the equilibrium opinion vector, given an error parameter $\epsilon$, the algorithm computes an approximate solution $\hat{\zz}$ in $O\left(m \log^3 n \log\left(\frac{\sqrt{n}}{\epsilon}\right)\right)$ time, satisfying
    \begin{equation*}
        (1 - \epsilon) \|\zz \|_2^2 \le \|\hat{\zz}\|^2_2 \leq (1 + \epsilon) \|\zz \|_2^2.
    \end{equation*}
    
    This algorithm demonstrates strong performance on large-scale undirected graphs, particularly for million-level networks. However, when the number of network nodes is relatively small, its performance is suboptimal due to the fixed overhead of preprocessing steps, such as graph sparsification and recursive decomposition. These steps are necessary for achieving efficiency on large graphs but become disproportionately costly for relatively small graphs. Additionally, due to its space complexity requirements, the algorithm cannot scale to graphs with more than tens of millions of edges. And it is important to note that this algorithm is specifically designed for undirected graphs and does not extend to directed graphs.

\section{Local Iteration Algorithm}\label{sec:localIter}

    In this section, we propose local iterative algorithms, which are local query methods that can efficiently and effectively calculate the equilibrium opinion vector.
    Due to the fact that the derivation regarding correctness and error guarantee can be easily extended to directed graphs, this algorithm is applicable to both undirected and directed graphs.
\subsection{Simple Local Iteration Algorithm}

    In this subsection, we first interpret the equilibrium opinion vector from the perspective of random walks on undirected graphs, and then propose a local iterative method that differs from the Monte Carlo approach. 

\subsubsection{Random Walk Interpretation}
    
    The following lemma illustrates that, the equilibrium opinion vector can be understood not only through its algebraic definition but also more comprehensively by employing a key concept known as the \textit{$\alpha$-discounted random walk}. 

\begin{lemma}\label{lem:walkex}
    For a connected graph $\calG = (V,E)$, the equilibrium opinion vector $\zz$ can be expressed as:
\begin{equation}
    \zz = s_{\mathrm{sum}} \cdot (\II + \DD)^{-1}\sum_{i=0}^\infty\left(\PP^{\top}\DD(\II + \DD)^{-1}\right)^{i}\bar{\sss}, \label{equo}
\end{equation}
    where $\PP = \DD^{-1} \AA$ is the transition matrix of the random walk on $\calG$, and $\bar{\sss}$ is the normalized vector of $\sss$ such that $\sss = s_{\mathrm{sum}} \cdot \bar{\sss}$.
\end{lemma}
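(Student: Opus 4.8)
The plan is to reduce the claimed series identity to a Neumann-series expansion of the fundamental matrix $(\II + \LL)^{-1}$ supplied by Lemma~\ref{le-z}. Because $\sss = s_{\mathrm{sum}}\cdot\bar{\sss}$, the scalar $s_{\mathrm{sum}}$ and the vector $\bar{\sss}$ recombine into $\sss$ on the right-hand side of~\eqref{equo}, so it suffices to prove the matrix identity
\begin{equation*}
    (\II + \LL)^{-1} = (\II + \DD)^{-1}\sum_{i=0}^\infty\left(\PP^{\top}\DD(\II + \DD)^{-1}\right)^{i}
\end{equation*}
and then apply both sides to $\sss$, invoking $\zz = (\II+\LL)^{-1}\sss$.

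First I would simplify the base of the series. For an undirected graph $\AA$ is symmetric, so $\PP^{\top}\DD = (\DD^{-1}\AA)^{\top}\DD = \AA\DD^{-1}\DD = \AA$, whence $\PP^{\top}\DD(\II+\DD)^{-1} = \AA(\II+\DD)^{-1}$. Next I would record the elementary factorization
\begin{equation*}
    \II + \LL = (\II + \DD) - \AA = \left(\II - \AA(\II + \DD)^{-1}\right)(\II + \DD),
\end{equation*}
which follows directly from $\LL = \DD - \AA$. Inverting and formally expanding the middle factor as a Neumann series produces exactly $(\II+\DD)^{-1}\sum_{i\ge 0}\bigl(\AA(\II+\DD)^{-1}\bigr)^{i}$, matching the target after the substitution $\AA = \PP^{\top}\DD$ is reversed.

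The crux is justifying convergence, i.e.\ that $\MM \defeq \AA(\II+\DD)^{-1}$ has spectral radius strictly below $1$. I would pass to the symmetric matrix $\tilde{\MM} \defeq (\II+\DD)^{-1/2}\AA(\II+\DD)^{-1/2}$, which is similar to $(\II+\DD)^{-1}\AA$; since $(\II+\DD)^{-1}\AA$ and $\MM = \AA(\II+\DD)^{-1}$ have identical spectra, it suffices to bound the eigenvalues of $\tilde{\MM}$. The congruences $\II - \tilde{\MM} = (\II+\DD)^{-1/2}(\II+\LL)(\II+\DD)^{-1/2}$ and $\II + \tilde{\MM} = (\II+\DD)^{-1/2}\bigl(\II + \DD + \AA\bigr)(\II+\DD)^{-1/2}$ exhibit $\II-\tilde{\MM}$ and $\II+\tilde{\MM}$ as congruent to the positive definite matrices $\II+\LL$ and $\II + (\DD + \AA)$ respectively; the latter is positive definite because the signless Laplacian $\DD+\AA$ is positive semidefinite. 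By Sylvester's law of inertia all eigenvalues of $\tilde{\MM}$ then lie in $(-1,1)$, so $\rho(\MM) = \rho(\tilde{\MM}) < 1$ and the series converges absolutely to $(\II - \MM)^{-1}$.

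Finally I would assemble the pieces: substituting the convergent sum into the factorization gives $(\II+\DD)^{-1}(\II-\MM)^{-1} = (\II+\LL)^{-1}$, and multiplying through by $\sss = s_{\mathrm{sum}}\bar{\sss}$ recovers~\eqref{equo}. I expect the factorization and the recombination of $s_{\mathrm{sum}}$ to be routine bookkeeping, with essentially all the mathematical content residing in the spectral-radius bound above.
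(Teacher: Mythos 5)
Your proposal is correct and follows essentially the same route as the paper's proof: the factorization $\II + \LL = \left(\II - \AA(\II + \DD)^{-1}\right)(\II + \DD)$, inversion, Neumann expansion, and the rewriting $\AA = \PP^{\top}\DD$ via symmetry of $\AA$. The one genuine difference is that the paper simply invokes the Neumann series without justification, whereas you prove convergence: your argument that $\AA(\II+\DD)^{-1}$ shares its spectrum with the symmetric matrix $(\II+\DD)^{-1/2}\AA(\II+\DD)^{-1/2}$, and that the congruences of $\II \mp \tilde{\MM}$ to $\II+\LL$ and to $\II + \DD + \AA$ (positive definite via the signless Laplacian) force all eigenvalues into $(-1,1)$, is correct and fills a real gap in the paper's write-up. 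This makes your proof strictly more complete than the published one; the only bookkeeping point worth adding is that connectedness (no isolated vertices) is what guarantees $\DD^{-1}$ exists, so that $\PP$ and the substitution $\AA = \PP^{\top}\DD$ are well defined.
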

\begin{proof}
        For a connected graph $\calG = (V,E)$, we have:
    \begin{align*}
        \zz &= (\II + \LL)^{-1}\sss = (\II + \DD - \AA)^{-1}\sss \\
        &= \left(\left(\II - \AA(\II + \DD)^{-1}\right)(\II + \DD)\right)^{-1}\sss \\
        &= (\II + \DD)^{-1}\left(\II - \AA(\II + \DD)^{-1}\right)^{-1}\sss.
    \end{align*}
    Let $\bar{\sss}$ be a normalized vector of $\sss$, that is to say, $\sss = s_{\mathrm{sum}} \cdot \bar{\sss}$. By the Neumann Series, we expand $\left(\II - \AA(\II + \DD)^{-1}\right)^{-1}$ as follows:
    \begin{align}
        \zz &= (\II + \DD)^{-1}\left(\II - \AA(\II + \DD)^{-1}\right)^{-1}\sss \notag \\
        &= (\II + \DD)^{-1}\sum_{i=0}^\infty\left(\AA(\II + \DD)^{-1}\right)^{i}\sss \notag \\
        &= (\II + \DD)^{-1}\sum_{i=0}^\infty\left(\AA\DD^{-1}\DD(\II + \DD)^{-1}\right)^{i}\sss \notag \\
        &= s_{\mathrm{sum}} \cdot (\II + \DD)^{-1}\sum_{i=0}^\infty\left(\PP^{\top}\DD(\II + \DD)^{-1}\right)^{i}\bar{\sss}, \label{equo-proof}
    \end{align}
    where $\PP = \DD^{-1} \AA$ is the transition matrix of the random walk on $\calG$. This completes the proof.
\end{proof}

     Specifically, a variant of the $\alpha$-discounted random walk can be defined as follows: starting from node $i$ with probability $\bar{s}_i$, at each step where the current node is $j$, the walk either (i) terminates at node $j$ with probability $1/(1+d_j)$, or (ii) transitions uniformly at random to a neighboring node with probability $d_j/(1+d_j)$. The equilibrium opinion vector $\zz$ is obtained through scaling the $\alpha$-discounted random walk's termination distribution with $s_{\mathrm{sum}}$.
    
    However, the heuristic algorithm that directly employs the Monte Carlo method to compute the final result is computationally intensive. As indicated by the Hoeffding Bound~\cite{hoeffding1994probability}, for any given node, a minimum of \(O\left(\frac{1}{\epsilon^2} \log\left(\frac{1}{p_f}\right)\right)\) random walks must be simulated to ensure an absolute error of \(\epsilon\) and a failure probability of \(p_f\). This requirement becomes impractical in scenarios demanding high precision.
    To address this challenge, we explore an alternative perspective to approximate the equilibrium opinion vector.

    Let $\mathbf{r}$ denote the \textit{residual vector}, and define the boundary vector $\mathbf{h} = \epsilon \mathbf{s}$ as the termination criterion for the algorithm, where $\epsilon \in (0,1)$ is an error parameter. The vector $\hat{\zz}$ represents the estimated equilibrium opinion vector. At time $t = 0$, the residual vector is initialized as $\mathbf{r}^{(0)} = \mathbf{s}$, and the estimated equilibrium opinion vector is set to $\hat{\zz}^{(0)} = \mathbf{0}$. At any subsequent time $t$, for any node $i \in V$, if the residue $r^{(t)}_i$ exceeds the corresponding threshold $h_i$, the following updates are performed:
\begin{align*}
    r^{(t+1)}_j &= r^{(t)}_j + \frac{1}{1+d_i} r^{(t)}_i, \quad \forall j \in N_i, \\
    \hat{z}^{(t+1)}_i &= \hat{z}^{(t)}_i + \frac{1}{1+d_i} r^{(t)}_i.
\end{align*}
    In the synchronous case where $\mathbf{h} = \mathbf{0}$, the value $r^{(t)}_i$ corresponds to the probability of a random walk being at node $i$ at time $t$, scaled by $s_{\mathrm{sum}}$. Similarly, $\hat{z}^{(t)}_i$ represents the probability that the random walk terminates at node $i$ before time $(t+1)$, also scaled by $s_{\mathrm{sum}}$. 

\subsubsection{Deterministic Local Iteration Algorithm}

    We propose \textsc{BoundLocalIter}, an algorithm that leverages the local characteristics of nodes to perform asynchronous updates. We use a first-in-first-out (FIFO) queue $Q$ to efficiently manage nodes satisfying the update condition. From the update rule for $r^{(t+1)}_u$, it follows that for any node $v$ and its residual $r_v$, the contribution of node $v$ to the residual of its neighbor $u \in N_v$ is $r_v/(1+d_v)$, while the contribution to the estimated expressed opinion of node $v$ itself is also $r_v/(1+d_v)$. After processing the residual of $v$, $r_v$ is reset to $0$. At each step, \textsc{BoundLocalIter} selects a single node meeting the update condition and performs a push operation. The algorithm terminates when $r_v \leq \epsilon s_v$ for all nodes $v \in V$. The pseudo-code for \textsc{BoundLocalIter} is provided in Algorithm~\ref{alg-localitr}. 

    \begin{algorithm}[t]
        \caption{$\textsc{BoundLocalIter}(\calG, \sss, \epsilon )$}
        \label{alg-localitr}
        \Input{A graph $\calG = (V, E)$, an internal opinion vector $\sss$, a relative error parameter $\epsilon$}
        \Output{Estimated equilibrium opinion vector $\hat{\zz}$}
        {
        \textbf{Initialize} :
            $\hat{\zz} = \zero;\ \rr = \sss$; a FIFO-Queue $Q$\\
        }
        \For{$v \in V$}{
            $Q\text{.append}(v)$\\
        }
        \While{$Q \neq \emptyset$}{
            $v = Q\text{.pop()}$\\
            $\hat{z}_v  = \hat{z}_v  + \frac{r_v}{1+d_v}$\\
            \For{each $u \in N_v$}{
                $r_u = r_u + \frac{r_v}{1+d_v}$\\
                \If{$r_u > \epsilon s_u$ and $u \notin Q$}{ \label{line:con}
                    $Q\text{.append}(u)$\\
                }
            }
            $r_v = 0$
        }
        \textbf{return} $\hat{\zz}$ \\
    \end{algorithm}

    The correctness of the algorithm relies on the relationship between the residual vector and the estimation error, which is guaranteed by the following lemma. 
    \begin{lemma}\label{lem-rela}
        Throughout the execution of Algorithm~\ref{alg-localitr}, the equality $\zz - \hat{\zz} = (\II + \LL)^{-1} \rr$ holds.
    \end{lemma}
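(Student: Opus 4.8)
The plan is to prove the identity by induction on the number of push operations, treating $\zz - \hat{\zz} = (\II + \LL)^{-1}\rr$ as a loop invariant that is preserved by each individual push. For the base case, note that at initialization $\hat{\zz} = \zero$ and $\rr = \sss$, so by Lemma~\ref{le-z} we have $\zz - \hat{\zz} = \zz = (\II + \LL)^{-1}\sss = (\II + \LL)^{-1}\rr$, and the invariant holds. It therefore suffices to analyze the effect of a single push and show the invariant survives it.

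Next I would translate one push on a node $v$ into matrix form. Writing $\delta = r_v/(1+d_v)$ for the value computed from the current residual, the algorithm increments $\hat{z}_v$ by $\delta$, adds $\delta$ to the residual of every neighbor $u \in N_v$, and then resets $r_v$ to $0$. Thus the change to the estimate is $\Delta\hat{\zz} = \delta\,\ee_v$. For the residual, the neighbor additions assemble into $\delta\,\AA\ee_v$, since the column $\AA\ee_v$ has a one exactly at each neighbor of $v$, while the reset subtracts the old value $r_v\,\ee_v = (1+d_v)\delta\,\ee_v$. Using $\DD\ee_v = d_v\ee_v$ and $\LL = \DD - \AA$ this yields
\begin{align*}
    \Delta\rr = \delta\,\AA\ee_v - (1+d_v)\delta\,\ee_v = \delta\bigl(\AA - \II - \DD\bigr)\ee_v = -\delta(\II + \LL)\ee_v.
\end{align*}
The key observation is that this is precisely $\Delta\rr = -(\II + \LL)\,\Delta\hat{\zz}$.

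Finally I would close the induction. Assuming $\zz - \hat{\zz} = (\II + \LL)^{-1}\rr$ before the push, afterward the left side becomes $\zz - (\hat{\zz} + \Delta\hat{\zz}) = (\II + \LL)^{-1}\rr - \Delta\hat{\zz}$, while the right side becomes $(\II + \LL)^{-1}(\rr + \Delta\rr) = (\II + \LL)^{-1}\rr + (\II + \LL)^{-1}\Delta\rr = (\II + \LL)^{-1}\rr - \Delta\hat{\zz}$, where the last step uses the key observation. The two expressions coincide, so the invariant is preserved. Since every step of Algorithm~\ref{alg-localitr} is such a push, the identity holds throughout the execution, which is what we wanted.

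I expect the main obstacle to be the careful bookkeeping in the residual update, namely recognizing that the per-neighbor increments combine into the single matrix term $\delta\,\AA\ee_v$ and that, together with the reset term, they telescope into exactly $-(\II + \LL)$ applied to the estimate increment. Once this cancellation is identified the inductive step is immediate; the invertibility of $\II + \LL$ needed to make sense of the statement follows from the positive definiteness noted earlier in the preliminaries.
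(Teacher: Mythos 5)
Your proof is correct and follows essentially the same route as the paper's: induction over push operations with the identity maintained as a loop invariant, the base case from the initialization $\hat{\zz} = \zero$, $\rr = \sss$, and the inductive step resting on the observation that a single push changes the residual by exactly $-(\II+\LL)$ applied to the change in the estimate. Your $\Delta\rr = \delta\,\AA\ee_v - (1+d_v)\delta\,\ee_v = -\delta(\II+\LL)\ee_v$ computation is the same algebra the paper carries out via $\rr^{(t+1)} = \rr^{(t)} - r^{(t)}_i\ee_i + r^{(t)}_i\AA(\II+\DD)^{-1}\ee_i$, just written in slightly different notation.
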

    \begin{proof}
        We demonstrate that the invariant holds by using induction. First, we verify that before any computation has begun, the invariant is satisfied by the initialized values:
        \begin{equation*}
            \zz - \hat{\zz}^{(0)} = \zz - \zero = (\II + \LL)^{-1}\rr^{(0)} = (\II + \LL)^{-1}\sss.
        \end{equation*}
        Let $\rr^{(t)}$ denote the residual vector before an update task for any node $i$, and let $\rr^{(t+1)}$ denote the residual vector after the update task. Then a single update task corresponds to the following steps:
        \begin{align*}
            \hat{\zz}^{(t+1)} &= \hat{\zz}^{(t)} + \frac{r^{(t)}_i}{1+d_i}\ee_i,\\
            \rr^{(t+1)} &= \rr^{(t)} -  r^{(t)}_i \ee_i +  r^{(t)}_i\cdot \AA(\II + \DD)^{-1}  \ee_i \\
            &= \rr^{(t)} - r^{(t)}_i (\II - \AA(\II + \DD)^{-1})\ee_i\\
            &= \rr^{(t)} - r^{(t)}_i (\II + \LL)(\II +\DD)^{-1}\ee_i\\
            &= \rr^{(t)} - \frac{r^{(t)}_i}{1+d_i} (\II + \LL)\ee_i.
        \end{align*}
        Assuming that $\zz - \hat{\zz}^{(t)} = (\II+\LL)^{-1}\rr^{(t)}$, it follows that
        \begin{align*}
            \zz - \hat{\zz}^{(t+1)} &= \zz - \hat{\zz}^{(t)} -\frac{r^{(t)}_i}{1+d_i} \ee_i =(\II +\LL)^{-1}\rr^{(t)} - \frac{r^{(t)}_i}{1+d_i}\ee_i\\
            &= (\II +\LL)^{-1}(\rr^{(t+1)} +  \frac{r^{(t)}_i}{1+d_i} (\II + \LL)\ee_i) - \frac{r^{(t)}_i}{1+d_i}\ee_i\\
            &= (\II+\LL)^{-1}\rr^{(t+1)}, 
        \end{align*}
        which completes the proof.
    \end{proof}
    Combining Lemma~\ref{lem-rela}, we can further establish the relative error guarantees for the results returned upon termination of the Algorithm~\ref{alg-localitr}, as shown in the following lemma.
    \begin{lemma}\label{lem-error}
        For any parameter $\epsilon \in (0,1)$, the estimator $\hat{\zz}$ returned by Algorithm~\ref{alg-localitr} satisfies the following relation:
        \begin{equation}
            (1-\epsilon)z_v \le \hat{z}_v \le z_v, \forall v \in V.
        \end{equation}
    \end{lemma}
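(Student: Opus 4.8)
The plan is to derive both inequalities directly from the invariant $\zz - \hat{\zz} = (\II+\LL)^{-1}\rr$ established in Lemma~\ref{lem-rela}, combined with two structural facts: the residual vector $\rr$ stays entrywise nonnegative throughout the run, and every entry of $(\II+\LL)^{-1}$ is positive. The positivity is exactly the earlier stated fact applied to $\XX = \II$, a nonnegative diagonal matrix with a nonzero diagonal entry; for the argument we only need that it is nonnegative.

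First I would show $\rr \ge \zeov$ entrywise at every step by induction on the push operations. Initially $\rr^{(0)} = \sss \ge \zeov$, since internal opinions lie in $[0,1]$. A single push at node $i$ adds $r_i^{(t)}/(1+d_i) \ge 0$ to each neighbor's residual and then resets $r_i$ to $0$, so no entry ever drops below zero and nonnegativity is preserved. With this in hand, the upper bound is immediate: because $(\II+\LL)^{-1} \ge \zeov$ and $\rr \ge \zeov$ entrywise, the product $(\II+\LL)^{-1}\rr$ is nonnegative, and by Lemma~\ref{lem-rela} it equals $\zz - \hat{\zz}$, giving $\hat{z}_v \le z_v$ for all $v$.

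For the lower bound I would use the termination condition. The queue maintains the invariant that any node $v$ with $r_v > \epsilon s_v$ is in $Q$ (it is enqueued precisely when a push raises its residual above threshold), so when the \textbf{while} loop exits with $Q = \emptyset$ we have $r_v \le \epsilon s_v$ for every $v \in V$, i.e. $\rr \le \epsilon\sss$ entrywise. Multiplying by the nonnegative matrix $(\II+\LL)^{-1}$ preserves this entrywise inequality, so
\begin{equation*}
\zz - \hat{\zz} = (\II+\LL)^{-1}\rr \le \epsilon\,(\II+\LL)^{-1}\sss = \epsilon\,\zz,
\end{equation*}
where the final equality is Lemma~\ref{le-z}. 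Reading off coordinates yields $z_v - \hat{z}_v \le \epsilon z_v$, hence $(1-\epsilon)z_v \le \hat{z}_v$.

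The main obstacle is not the algebra but justifying the entrywise monotonicity of multiplication by $(\II+\LL)^{-1}$, which is valid only because the matrix is nonnegative; I would therefore state the nonnegativity of both $(\II+\LL)^{-1}$ and $\rr$ explicitly before invoking it. A related subtlety is the termination hypothesis when some $s_v = 0$, where the threshold $\epsilon s_v$ degenerates to $0$; I would note that this lemma takes termination of Algorithm~\ref{alg-localitr} as given and uses only the exit condition $\rr \le \epsilon\sss$, leaving the robustness treatment of zero internal opinions to be handled separately.
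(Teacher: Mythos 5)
Your proof is correct and follows essentially the same route as the paper's: both rest on the invariant $\zz - \hat{\zz} = (\II+\LL)^{-1}\rr$ from Lemma~\ref{lem-rela}, using nonnegativity of $\rr$ for the upper bound and the termination condition $\rr \le \epsilon\sss$ together with the entrywise nonnegativity of $(\II+\LL)^{-1}$ for the lower bound. Your version merely spells out details the paper leaves implicit (the induction showing $\rr \ge \zeov$, the positivity of $(\II+\LL)^{-1}$, and the queue invariant behind the exit condition), which strengthens rather than changes the argument.
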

    \begin{proof}
        On the one hand, since $\rr \ge \zero$, by Lemma~\ref{lem-rela}, we have $z_v \ge \hat{z}_v, \forall v \in V$; on the other hand, the condition in Line~\ref{line:con} of Algorithm~\ref{alg-localitr} indicates that, after termination, the vector $\rr \le \epsilon \sss$, then we have that:
        \begin{equation}
            \zz - \hat{\zz} =(\II +\LL)^{-1} \rr  \le\epsilon (\II +\LL)^{-1} \sss  = \epsilon  \zz.
        \end{equation}
        The equation above implies that $(1-\epsilon)z_v \le \hat{z}_v, \forall v \in V$, hence the estimator $\hat{\zz}$ returned by Algorithm~\ref{alg-localitr} satisfies $(1-\epsilon)z_v \le \hat{z}_v \le z_v, \forall v \in V$.
    \end{proof}

    The complexity of Algorithm~\ref{alg-localitr} is primarily influenced by the cost of performing push operations. According to Line 2, a node $v$ can only be pushed if it satisfies the condition $r_v \ge \epsilon s_v$. Each push operation for node $v$ increases $z_v$ by at least $\epsilon s_v/(1+d_v)$, and this requires $O(d_v)$ time. Thus, the term $\epsilon s_v/(1+d_v)$ can be seen as the \textit{unit gain} of the push operation with respect to the growth of $z_v$.

    Meanwhile, $\hat{z}_v$ is always an underestimate of $z_v$, that is, $\hat{z}_v \le z_v, \forall v \in V$. As a result, for any node $v$, the overall complexity of the push operations is capped by this upper bound $z_v$ divided by the unit gain $\epsilon s_v/(1+d_v)$. Therefore, by considering each node $v \in V$ separately, we derive that total time complexity of $O(\frac{1}{\epsilon} \cdot \sum_{v\in V}\frac{z_v d_v (1+d_v)}{s_v})$ when $s_{\min} > 0$.

    However, this proof approach does not take into account the actual scenario in the algorithm where the residue gradually decreases to a bound, resulting in a relatively loose upper bound on the time complexity. According to Lemma~\ref{lem-rela}, we know that
        $\hat{z}_u = \ee_u^{\top}(\II +\LL)^{-1}(\sss - \rr)$.
    Therefore, the goal of the algorithm is to make each residue smaller, or equivalently, to reduce the sum of the residue vector $r_{\mathrm{sum}}$. In the following, we will provide a tighter time complexity analysis from the perspective of minimizing the sum of the residue vector.

     \begin{theorem}~\label{the:runtime}
        When $s_{\min} > 0$, the upper bound on the running time of Algorithm~\ref{alg-localitr} is 
        \begin{equation*}
            O\left(d_{\max}m \log\frac{ d_{\max} \bar{s}}{\epsilon s_{\min}}\right).
        \end{equation*}
    \end{theorem}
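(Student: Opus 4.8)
The plan is to track the $\ell_1$ mass of the residual vector, $r_{\mathrm{sum}} = \one^\top \rr$, show that it decays geometrically, and charge the cost of the pushes to this decay. First I would record the exact bookkeeping of a single push: when node $i$ is processed, $\hat z_i$ grows by $r_i/(1+d_i)$ and the identical amount is permanently removed from the residual system, while the remaining $d_i r_i/(1+d_i)$ is redistributed to the neighbors of $i$. Hence each push decreases $r_{\mathrm{sum}}$ by exactly $r_i/(1+d_i)$, and the conserved identity $r_{\mathrm{sum}} + \hat z_{\mathrm{sum}} = s_{\mathrm{sum}}$ holds throughout (consistent with Lemma~\ref{lem-rela}). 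Since the algorithm pushes $i$ only when $r_i > \epsilon s_i$, every push removes at least $\epsilon s_{\min}/(1+d_{\max})$ of residual mass, and the stopping rule $r_v \le \epsilon s_v$ for all $v$ is certainly met once $r_{\mathrm{sum}} \le \epsilon s_{\min}$, since then $r_v \le r_{\mathrm{sum}} \le \epsilon s_{\min} \le \epsilon s_v$.

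Next I would use the FIFO discipline to partition the pushes into rounds, where one round processes every node queued at its start. The effect of a round on the residuals is governed by the synchronous operator $\AA(\II+\DD)^{-1}$, whose $i$-th column sum equals $d_i/(1+d_i) \le d_{\max}/(1+d_{\max})$; thus for the nonnegative $\rr$ one round multiplies the (active) residual mass by at most $d_{\max}/(1+d_{\max}) = 1 - 1/(1+d_{\max})$, and processing asynchronously within the round only absorbs more, never less. Using $\log\frac{1+d_{\max}}{d_{\max}} \ge \frac{1}{1+d_{\max}}$, driving $r_{\mathrm{sum}}$ from its initial value $r_{\mathrm{sum}}^{(0)} = s_{\mathrm{sum}} = n\bar s$ down to a threshold $\tau$ requires only $O\!\left(d_{\max}\log\frac{n\bar s}{\tau}\right)$ rounds.

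For the total running time I would combine two accountings separated by the threshold $\tau = \Theta\!\left(m\,\epsilon s_{\min}/(1+d_{\max})\right)$. Every round costs $O(m)$, because the processed nodes have total degree at most $2m$ and $n = O(m)$ for connected $\calG$; by the geometric decay only $O\!\left(d_{\max}\log\frac{n\bar s (1+d_{\max})}{m\,\epsilon s_{\min}}\right)$ rounds elapse before $r_{\mathrm{sum}}$ drops below $\tau$, and invoking $n = O(m)$ together with $(1+d_{\max})/d_{\max} = O(1)$ collapses this logarithm to $\log\frac{d_{\max}\bar s}{\epsilon s_{\min}}$, yielding $O\!\left(m\,d_{\max}\log\frac{d_{\max}\bar s}{\epsilon s_{\min}}\right)$ for this phase. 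Once $r_{\mathrm{sum}} \le \tau$, the residual still to be absorbed is at most $\tau$; since each further push absorbs at least $\epsilon s_{\min}/(1+d_{\max})$, there are at most $\tau(1+d_{\max})/(\epsilon s_{\min}) = O(m)$ remaining pushes, each of cost $O(d_{\max})$, contributing only $O(m\,d_{\max})$. Summing the two phases gives the claimed bound.

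The hard part will be making the per-round contraction rigorous for the asynchronous FIFO in the small-residual regime: nodes whose residual has fallen below their threshold are not re-pushed, so they carry ``inactive'' residual mass (at most $\epsilon s_{\mathrm{sum}}$ in total) that the column-sum bound on $\AA(\II+\DD)^{-1}$ does not act on. I would need to argue that while $r_{\mathrm{sum}}$ stays above the phase threshold the active mass remains a constant fraction of $r_{\mathrm{sum}}$, so that the contraction constant degrades by at most a factor, and then align $\tau$ with this regime so that the cheap tail phase takes over exactly when the contraction argument would otherwise weaken. By comparison, the supporting steps—the conserved identity, the column-sum computation, and the $n = O(m)$ simplification of the logarithm—are routine.
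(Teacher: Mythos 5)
Your proof skeleton (FIFO rounds, geometric decay of $r_{\mathrm{sum}}$, then a per-push tail accounting) matches the paper's, but the step you yourself flag as ``the hard part'' is a genuine gap, and the fix you sketch cannot work with your choice of threshold. Your round contraction only says that $r_{\mathrm{sum}}$ drops by at least $\frac{1}{1+d_{\max}}$ times the \emph{active} mass $\sum_{v\in S^{(i)}} r_v^{(i)}$, while each inactive node $v$ is only known to satisfy $r_v\le\epsilon s_v$, so the inactive mass can be as large as $\epsilon s_{\mathrm{sum}}$. Consequently, geometric decay of $r_{\mathrm{sum}}$ is guaranteed only while, say, $r_{\mathrm{sum}}\ge 2\epsilon s_{\mathrm{sum}}$. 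Your hand-off point, however, is $\tau=\Theta\left(m\epsilon s_{\min}/(1+d_{\max})\right)$, and since $2m=\sum_v d_v\le n d_{\max}$ this is $O(\epsilon n s_{\min})=O(\epsilon s_{\mathrm{sum}})$; when $\bar{s}\gg s_{\min}$ (or $d_{\max}$ far exceeds the average degree) we get $\tau\ll\epsilon s_{\mathrm{sum}}$. In the window $\tau\le r_{\mathrm{sum}}\le 2\epsilon s_{\mathrm{sum}}$ the active fraction can be arbitrarily small --- e.g.\ a single node sitting just above its threshold --- so ``the active mass remains a constant fraction of $r_{\mathrm{sum}}$'' is simply false there, and the geometric-decay count does not bound the number of rounds spent in this window. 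The only tool you have left is the per-push accounting (each push costs $O(d_{\max})$ and absorbs at least $\epsilon s_{\min}/(1+d_{\max})$), which bounds the work in the window only by $O\left(s_{\mathrm{sum}} d_{\max}^2/s_{\min}\right)=O\left(n\bar{s}\,d_{\max}^2/s_{\min}\right)$; this exceeds the target $O\left(d_{\max}m\log\frac{d_{\max}\bar{s}}{\epsilon s_{\min}}\right)$ by an unbounded factor (take $\bar{s}/s_{\min}=\sqrt{n}$).

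The paper's proof of Theorem~\ref{the:runtime} closes precisely this hole with a device your proposal lacks: it analyzes a run with \emph{degree-proportional} thresholds $h_v=\epsilon s_{\min}d_v$, rescaling $\epsilon$ to $\epsilon/d_{\max}$ so that these thresholds lie pointwise below the original $\epsilon s_v$ (this rescaling is also where the $d_{\max}$ inside the logarithm of the theorem comes from). With such thresholds, active nodes satisfy $r_v^{(i)}\ge\epsilon s_{\min}d_v$ and inactive ones $r_v^{(i)}\le\epsilon s_{\min}d_v$, so the mediant inequality gives $\sum_{v\in S^{(i)}}r_v^{(i)}\ge\frac{\sum_{v\in S^{(i)}}d_v}{\sum_v d_v}\,r_{\mathrm{sum}}^{(i)}$. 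The round contraction is therefore proportional to the \emph{work} $\sum_{v\in S^{(i)}}d_v$ actually performed in that round, at every scale of $r_{\mathrm{sum}}$: rounds with few active nodes contract little but also cost little, and no constant-active-fraction assumption is ever needed. Accumulating the exponents yields $r_{\mathrm{sum}}\le \exp\left(-T/((1+d_{\max})m)\right)s_{\mathrm{sum}}$ with $T$ the total work, which controls the work all the way down to $r_{\mathrm{sum}}\le\epsilon m s_{\min}$, and only there does the per-push tail argument take over --- the two phases meet exactly because the phase-two threshold is expressed in the same degree-weighted currency. To repair your write-up, replace the uniform activity claim by this degree-reindexed threshold comparison; as written, the regime between your two phases is uncontrolled.
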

    \begin{proof}
        We start with the case of $\hh = \epsilon s_{\min} \dd$ and prove that when the relative error $\epsilon$ is set to $\epsilon'/d_{\max}$, the bound it produces is less than $\epsilon' s_v$ for any node $v \in V$. This means that the time complexity of Algorithm~\ref{alg-localitr} is less than this case. 

        During the processing, we add a dummy node that does not actually exist. This node is initially placed at the head of the queue and is re-appended to the queue each time it is popped. The set of nodes in the queue when this dummy node is processed for the $(i+1)$-th time is regarded as $S^{(i)}$, and the residue vector at this time is regarded as $\rr^{(i)}$. The sum of the vector $\rr^{(i)}$ is denoted as $r_{\mathrm{sum}}^{(i)}$. The process of handling this set is considered the $(i + 1)$-th iteration. 
        In the context of the $(i+1)$-th iteration, when node $v\in S^{(i)}$ is about to be processed, it holds that $r_v \ge r^{(i)}_v$. Upon the completion of this operation, the sum of expressed vector is increased by $r_v/(1+d_v)$. Consequently, by the conclusion of the $(i+1)$-th iteration, the total reduction in the sum of residue vector amounts to:
        \begin{align}\label{equo:sumprepare}
            r_{\mathrm{sum}}^{(i)} - r_{\mathrm{sum}}^{(i+1)} &= \one^{\top}(\II+\LL)(\hat{\zz}^{(i+1)} - \hat{\zz}^{(i)}) \notag\\
            &= \sum_{v\in S^{(i)}} \frac{r_v}{1+d_v} \ge  \sum_{v\in S^{(i)}}\frac{ r^{(i)}_v}{1+d_v}.
        \end{align}
        Given that the bound for any node $v$ is $\epsilon d_v s_{\min}$, we obtain:
        \begin{equation*}
            \frac{\sum_{v\in S^{(i)}}r^{(i)}_v}{\sum_{v\in S^{(i)}}d_{v}}\geq \epsilon s_{\min}, \text{and} \quad \frac{\sum_{v\notin S^{(i)}}r^{(i)}_v}{\sum_{v\notin S^{(i)}}d_{v}}\leq \epsilon s_{\min}.
        \end{equation*}
        Therefore, it follows that
        \begin{equation*}
            \frac{\sum_{v\in S^{(i)}}r^{(i)}_v}{\sum_{v\in S^{(i)}}d_{v}}\geq \frac{\sum_{v\in S^{(i)}}r^{(i)}_v + \sum_{v\notin S^{(i)}}r^{(i)}_v}{\sum_{v\in S^{(i)}}d_{v} + \sum_{v\notin S^{(i)}}d_{v}} = \frac{r_{\mathrm{sum}}^{(i)}}{m}.
        \end{equation*}
        Substituting the above expression into Eq.~\eqref{equo:sumprepare}, we obtain:
        \begin{align*}
            r^{(i+1)}_{\mathrm{sum}} &\le r_{\mathrm{sum}}^{(i)} - \sum_{v\in S^{(i)}} \frac{r^{(i)}_v}{1+d_v} \le r_{\mathrm{sum}}^{(i)} - \frac{1}{1+d_{\max}} \sum_{v\in S^{(i)}}  r^{(i)}_v \\
            &\le (1-\frac{1}{(1+d_{\max})m}\sum_{v\in S^{(i)}}d_{v})r_{\mathrm{sum}}^{(i)}\\
            &\le \prod_{t=0}^{i}(1-\frac{1}{(1+d_{\max})m}\sum_{v\in S^{(t)}}d_{v}) r_{\mathrm{sum}}^{(0)}.
        \end{align*}
        By utilizing the fact that $1-x\leq e^{-x}$, we obtain that
        \begin{align}\label{equ:timefor}
            r^{(i+1)}_{\mathrm{sum}} &\leq\exp\left(-\sum_{t=0}^{i}(\frac{1}{(1+d_{\max})m} \sum_{v\in S^{(t)}}d_v)\right)r_{\mathrm{sum}}^{(0)}\notag \\
            &=\exp\left(-\frac{1}{(1+d_{\max})m} (\sum_{t=0}^{i}\sum_{v\in S^{(t)}}d_v)\right)\sum_{v\in V} s_v.
        \end{align}

        Recall that the time cost of a single push operation at node $v$ is  $O(d_v)$, let $T^{(i+1)}=\sum_{t=0}^{i}\sum_{v\in S^{(t)}}d_{v}$ be defined as the total time cost prior to the commencement of the $(i+1)$-th iteration. According to Eq.~\eqref{equ:timefor}, to satisfy $r_{\mathrm{sum}}^{(i+1)}\leq \epsilon s_{\min}\sum_{v\in V}d_v = \epsilon m s_{\min}$, it suffices to find the smallest $i$ that meets the following conditions:
        \begin{equation*}
            \exp\left(-\frac{1}{(1+d_{\max})m} T^{(i+1)}\right)\leq \frac{\epsilon m s_{\min}}{n \bar{s}}  \leq\exp\left(-\frac{1}{(1+d_{\max})m} T^{(i)}\right).
        \end{equation*}
        Thus, we obtain $T^{(i)}\leq (1+d_{\max})m\log\frac{n \bar{\sss}}{\epsilon m s_{\min}}\leq T^{(i+1)}$, Given the fact that $T^{(i+1)}-T^{(i)}=\sum_{v\in S^{(i)}}d_{v}\leq m$, we further derive:
        \begin{align*}
            T^{(i+1)}&\leq T^{(i)}+m\leq (1+d_{\max})m\log\frac{n \bar{s}}{\epsilon m s_{\min}}+m \\
            &\leq (1+d_{\max})m\log\frac{\bar{s}}{\epsilon s_{\min}}+m.
        \end{align*}

        For node $v$, the push operation reduces $r_{\mathrm{sum}}$ by $r_v/(1+d_v)$, with a corresponding time cost of $O(d_v)$. Consequently, after incurring a total time cost $T$,  the reduction in $r_{\mathrm{sum}}$ is at least $\epsilon T s_{\min}/(1+d_{\max})$. Hence starting from the state of $r_{\mathrm{sum}} \le  \epsilon m s_{\min}$, the time cost $T$ is bounded by $O((1+d_{\max})m)$, thereby constraining the overall time complexity to $O(d_{\max}m\log\frac{\bar{s}}{\epsilon s_{\min}})$. By setting relative error to $\epsilon/d_{\max}$, the upper bound on the running time of Algorithm~\ref{alg-localitr} becomes $O(d_{\max}m \log\frac{d_{\max} \bar{s} }{\epsilon s_{\min}})$, which completes the proof.
    \end{proof}

\subsubsection{Robustness Enhancement}
    Algorithm~\ref{alg-localitr} offers a technical guarantee of logarithmic relative error; however, its applicability is limited and lacks clarity across diverse scenarios. Specifically, for a node $v$ with an internal opinion of $0$, the iterative loop pertaining to $v$ may fail to terminate. To address this limitation, we enhance Algorithm~\ref{alg-localitr} to introduce Algorithm~\ref{alg-improved}. The refined algorithm incorporates a constant $c>0$ into the internal opinion, thereby ensuring that the iteration bound for any node within the network is strictly positive. This modification effectively prevents excessive iterations and enhances the robustness of the algorithm.

    \begin{algorithm}
        \caption{$\textsc{ImprovedBLI}(\calG, \sss, \epsilon, \sigma, c)$}
        \label{alg-improved}
        \Input{  A graph $\calG = (V, E)$, an internal opinion vector $\sss$, a relative error parameter $\epsilon$, a threshold $\sigma$, a constant $c$}
        \Output{Estimated equilibrium opinion vector $\hat{\zz}$}
        {
        \textbf{Initialize} :
            $\hat{\zz} = \zero;\ \rr = \zero; \epsilon' = \frac{\sigma}{c+\sigma}\epsilon$ \\
        }
        $r_v = s_v + c, \forall v \in V$\\
        $\hat{\zz} = \textsc{BoundLocalIter}(\calG, \sss, \epsilon')$\\
        $\hat{z}_v = \hat{z}_v -c, \forall v \in V$\\
        \textbf{return} $\hat{\zz}$ \\
    \end{algorithm}

    \begin{lemma}
        Given a constant $c > 0$, a relative error parameter $\epsilon$, and a threshold $\sigma$. By setting $\epsilon' = \frac{\sigma}{c+\sigma}\epsilon$, for any equilibrium opinion $z_v > \sigma$, the result $\hat{z}_v$ returned by Algorithm~\ref{alg-improved} satisfies:
        \begin{equation}
            (1-\epsilon)z_v \le \hat{z}_v \le z_v, \forall v \in V.
        \end{equation}
    \end{lemma}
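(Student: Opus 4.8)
The plan is to reduce this statement to Lemma~\ref{lem-error} by viewing Algorithm~\ref{alg-improved} as an invocation of \textsc{BoundLocalIter} on a \emph{shifted} internal opinion vector. First I would set $\sss' \defeq \sss + c\one$, which is exactly the vector that the initialization $r_v = s_v + c$ encodes as the starting residual, and let $\zz' \defeq (\II + \LL)^{-1}\sss'$ denote the corresponding equilibrium. The crucial structural fact is that $(\II + \LL)^{-1}$ is row-stochastic, so $(\II + \LL)^{-1}\one = \one$ and hence
\begin{equation*}
    \zz' = (\II + \LL)^{-1}(\sss + c\one) = \zz + c\one,
\end{equation*}
i.e. $z'_v = z_v + c$ for every $v$. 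Thus a constant shift on the input propagates to an identical constant shift on the equilibrium, which is precisely what makes the final correction step $\hat{z}_v \leftarrow \hat{z}_v - c$ legitimate.

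Next I would apply Lemma~\ref{lem-error} directly to the internal run, which executes \textsc{BoundLocalIter} on $\sss'$ with relative error parameter $\epsilon'$. Denoting its output by $\hat{\zz}'$, the lemma guarantees $(1-\epsilon')z'_v \le \hat{z}'_v \le z'_v$ for all $v$. Since Algorithm~\ref{alg-improved} returns $\hat{z}_v = \hat{z}'_v - c$, substituting $z'_v = z_v + c$ yields the upper bound immediately: $\hat{z}_v = \hat{z}'_v - c \le z'_v - c = z_v$.

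For the lower bound I would unfold $\hat{z}_v = \hat{z}'_v - c \ge (1-\epsilon')(z_v + c) - c = (1-\epsilon')z_v - \epsilon' c$ and require this quantity to be at least $(1-\epsilon)z_v$. Rearranging, the target inequality is equivalent to $\epsilon' \le \frac{\epsilon z_v}{z_v + c}$. Here the hypothesis $z_v > \sigma$ enters: since the map $z \mapsto \frac{z}{z+c}$ is strictly increasing for $c > 0$, we have $\frac{z_v}{z_v + c} > \frac{\sigma}{\sigma + c}$, so the prescribed value $\epsilon' = \frac{\sigma}{c + \sigma}\epsilon$ satisfies the required bound and the lower bound follows.

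The step I expect to be most delicate is the conversion of error types in the lower bound: Lemma~\ref{lem-error} supplies a \emph{relative} guarantee with respect to the shifted quantity $z'_v = z_v + c$, but after subtracting the constant $c$ the error must be re-expressed relatively to the \emph{original} $z_v$. The additive slack $\epsilon' c$ introduced by the shift is exactly what forces $\epsilon'$ to be shrunk below $\epsilon$, and controlling it uniformly across nodes is only possible because of the a priori lower bound $z_v > \sigma$; the monotonicity argument that turns this pointwise condition into the single clean choice $\epsilon' = \frac{\sigma}{c+\sigma}\epsilon$ is the heart of the argument.
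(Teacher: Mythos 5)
Your proof is correct and follows essentially the same route as the paper's: shift the input to $\sss' = \sss + c\one$, use $(\II+\LL)^{-1}\one = \one$ to get $\zz' = \zz + c\one$, apply the guarantee of Lemma~\ref{lem-error} for the inner \textsc{BoundLocalIter} run with parameter $\epsilon'$, and convert back to a relative bound on $z_v$ using the threshold $z_v > \sigma$. The only cosmetic differences are that you make the upper bound $\hat{z}_v \le z_v$ and the monotonicity of $z \mapsto z/(z+c)$ explicit, whereas the paper divides the additive bound $z_v - \hat{z}_v \le \epsilon'(z_v + c)$ by $z_v$ directly; both yield $(1+\nicefrac{c}{\sigma})\epsilon' = \epsilon$.
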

    \begin{proof}
        Let $\sss' = \sss + c\one$, and let $\hat{\zz}' = \hat{\zz} + c\one$, and let $\zz'$ be the equilibrium opinion vector when the internal opinion in the network is $\sss'$. From Lemma~\ref{lem:walkex}, we have $\zz' = (\II+\LL)^{-1} \sss' =(\II+\LL)^{-1}(\sss + c\one) = \zz + c\one$. Therefore, we have:
        \begin{equation*}
            \zz - \hat{\zz} = \zz' - \hat{\zz}'  \le \epsilon' (\II+\LL)^{-1}\sss' = \epsilon'(\zz + c\one) .
        \end{equation*}
        Thus, for any node $v \in V$, we have $z_v - \zz'_v \le (z_v + c)\epsilon'$. Because in a connected graph, as long as there is a node with an internal opinion greater than 0, $z_v$ must be greater than 0. By dividing both sides of the equation by $z_v$ simultaneously, we obtain $\frac{z_v - \hat{z}_v}{z_v} \le \frac{z_v + c}{z_v}\epsilon'$. For any node $v \in V$, such that $z_v \ge \sigma$, we have  $\frac{z_v - \hat{z}_v}{z_v} \le \frac{z_v + c}{z_v}\epsilon' \le (1+\frac{c}{\sigma})\epsilon' = \epsilon$, which completes the proof.
    \end{proof}

    \begin{theorem}
        Let $c = \sigma$, the upper bound on the running time of Algorithm~\ref{alg-improved} is 
        \begin{equation}\label{equ:time1}
            O\left(d_{\max}m\log\frac{d_{\max} \bar{s}}{\epsilon c}\right).
        \end{equation}
    \end{theorem}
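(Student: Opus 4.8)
The plan is to reduce the running-time analysis of Algorithm~\ref{alg-improved} to the bound already proved for Algorithm~\ref{alg-localitr} in Theorem~\ref{the:runtime}. The key observation is that \textsc{ImprovedBLI} is simply an invocation of \textsc{BoundLocalIter} on the \emph{shifted instance} $\sss' = \sss + c\one$ with rescaled error $\epsilon'$, bracketed by two $O(n)$ steps (forming the shifted residual $r_v = s_v + c$ at the start, and subtracting $c$ from each $\hat z_v$ at the end). Indeed, initializing the residual to $\sss'$ and terminating once $r_v \le \epsilon'(s_v + c)$ for every $v$ is exactly the behaviour of \textsc{BoundLocalIter} run with internal-opinion vector $\sss'$ and relative error $\epsilon'$. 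Since \textsc{BoundLocalIter} already pops every node at least once, its cost is $\Omega(m)$ and dominates the two $O(n)$ shifts, so the total running time equals, up to lower-order terms, that of \textsc{BoundLocalIter} on $(\calG, \sss', \epsilon')$.

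First I would check that Theorem~\ref{the:runtime} actually applies to the shifted instance. Because every internal opinion lies in $[0,1]$ we have $s_{\min} \ge 0$, so the shift gives $s'_{\min} = s_{\min} + c \ge c > 0$. This is exactly the purpose of the additive constant $c$: it forces a strictly positive minimum, so the hypothesis $s_{\min} > 0$ of Theorem~\ref{the:runtime} holds for $\sss'$ even when the original vector $\sss$ contains zero entries, which is what removes the non-termination issue motivating the enhancement.

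Next I would substitute the parameters of the shifted instance into the bound $O\!\left(d_{\max} m \log \frac{d_{\max}(\bar s + c)}{\epsilon'(s_{\min} + c)}\right)$ obtained from Theorem~\ref{the:runtime}, where I have already used that the average of $\sss'$ is $\bar s + c$ and its minimum is $s_{\min} + c$. Setting $c = \sigma$ makes $\epsilon' = \frac{\sigma}{c + \sigma}\epsilon = \tfrac{1}{2}\epsilon$, so the bound reads
\begin{equation*}
    O\!\left(d_{\max} m \log \frac{2\,d_{\max}(\bar s + c)}{\epsilon\,(s_{\min} + c)}\right).
\end{equation*}
It then remains only to simplify the logarithm: bounding the denominator below by $s_{\min} + c \ge c$ and absorbing the constant factor $2$ into the asymptotics, the argument is at most a constant multiple of $\frac{d_{\max}(\bar s + c)}{\epsilon c}$, giving the claimed $O\!\left(d_{\max} m \log \frac{d_{\max}\bar s}{\epsilon c}\right)$.

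I expect this final simplification to be the only delicate point. Replacing $s_{\min} + c$ by $c$ and discarding the factor $2$ are harmless, but the additive $+c$ in the numerator must be shown not to change the logarithm up to constants. This is immediate when $\bar s \ge c$, since then $\bar s + c \le 2\bar s$ and $\log\frac{d_{\max}(\bar s + c)}{\epsilon c} = O\!\left(\log\frac{d_{\max}\bar s}{\epsilon c}\right)$; the regime $\bar s < c$ would need separate handling (or the mild standing assumption that $\bar s$ and $c$ are comparable). Making this step precise, rather than silently treating $\bar s + c$ and $\bar s$ as interchangeable, is where I would concentrate the effort.
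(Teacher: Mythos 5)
Your proposal is correct and takes essentially the same route as the paper: the paper's proof likewise invokes Theorem~\ref{the:runtime} on the shifted instance (internal opinions $\sss + c\one$, error $\epsilon' = \frac{\sigma}{c+\sigma}\epsilon$), obtaining the intermediate bound $O\left(d_{\max}m\log\frac{(1/c+1/\sigma)(\bar{s}+c)d_{\max}}{\epsilon}\right)$, and then sets $c=\sigma$. You are in fact more careful than the paper at the final simplification, since the paper silently replaces $\bar{s}+c$ by $\bar{s}$ in the logarithm without addressing the regime $\bar{s} < c$ that you explicitly flag.
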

    \begin{proof}
        By Theorem~\ref{the:runtime}, when $\epsilon' = \frac{\sigma}{c+\sigma}\epsilon$, we have that the upper bound on the running time of Algorithm~\ref{alg-improved} is
        \begin{equation}
            O\left(d_{\max}m\log{\frac{(1/c+1/\sigma)(\bar{s} +c)d_{\max}}{\epsilon}}\right).
        \end{equation}
        By setting $c = \sigma$, we obtain an upper bound on the running time of $O(d_{\max}m\log\frac{\bar{s}d_{\max}}{\epsilon c})$, which completes the proof.
    \end{proof}

\subsection{Fast Local Iteration Algorithm}
    In each iteration of the \textsc{BoundLocalIter} of Algorithm~\ref{alg-improved}, for the selected node $v$, its estimated expressed opinion $\hat{z}_v$ increases by a unit multiple $r_v$. That is to say, for each given unit quantity $r_v$, the required step size $\omega$ is $1$, which means the increase in expressed opinion in each iteration is $\omega \cdot r_v$. Therefore, an intuitive acceleration method is to increase the step size $\omega$ at each iteration to ensure faster convergence of the expressed opinion. Next, we demonstrate the equivalence between \textsc{BoundLocalIter} and a variant of the Gauss-Seidel method and introduce a faster method based on the SOR technique.

    At the beginning of the $t$-th iteration, let the set of elements in the queue be $S^{(t)}$. The elements are reindexed such that elements from $1$ to $|S^{(t)}|$ belong to $S^{(t)}$ and maintain the same order as in the queue, while elements indexed from $|S^{(t)}|+1$ to $n$ belong to $S^{(t)} \setminus V$. The Gauss-Seidel method updates $\zz$ using the following iteration for solving the linear system $\AA\zz=\bb$, with the residual vector $\rr = \bb - \AA\hat{\zz}$:
    \begin{equation}\label{gauss}
        \hat{z}_i^{(t)}=\frac{1}{a_{ii}}\left[-\sum_{j=1}^{i-1}(a_{ij}\hat{z}_j^{(t)})-\sum_{j=i+1}^n(a_{ij}\hat{z}_j^{(t-1)})+b_i\right], i \in S^{(t)}.
    \end{equation} 
    which computes $\hat{z}_i^{(t)}$ using the most recently calculated values $\hat{z}_{1}^{(t)},\ldots,\hat{z}_{i-1}^{(t)}$. In the standard Gauss-Seidel iteration, the set $S^{(t)} = V$ remains equal to $V$ for every iteration step $t$. 
    The following theorem demonstrates that the \textsc{BoundLocalIter} algorithm is mathematically equivalent to the adapted Gauss-Seidel approach.
    \begin{theorem}\label{the:gauss}
        When $\bb = \sss$ and $\AA = \II + \LL$, the iterations in \textsc{BoundLocalIter} are equivalent to the updates of $\hat{\zz}$ and $\rr$ in the Gauss-Seidel method defined by Eq. ~\eqref{gauss}.
    \end{theorem}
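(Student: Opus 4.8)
The plan is to prove the equivalence at the level of a single elementary operation: one \emph{push} in \textsc{BoundLocalIter} on a node $v$ coincides exactly with one residual-form coordinate relaxation of Gauss-Seidel that zeroes the $v$-th residual, and then to argue that the FIFO queue replays these relaxations in precisely the order prescribed by Eq.~\eqref{gauss}. First I would record that the residual $\rr$ maintained by the algorithm is nothing but the linear-system residual $\bb - \AA\hat{\zz}$: since $\zz = (\II+\LL)^{-1}\sss = \AA^{-1}\bb$, Lemma~\ref{lem-rela} gives $\rr = (\II+\LL)(\zz - \hat{\zz}) = \bb - \AA\hat{\zz}$, so the two notions of residual agree throughout the run.

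The algebraic core is a direct entrywise computation using $\AA = \II + \LL$, whose entries are $a_{vv} = 1 + d_v$, $a_{vu} = -1$ for $u \in N_v$, and $a_{vu} = 0$ otherwise. The Gauss-Seidel choice that annihilates coordinate $v$ increments $\hat{z}_v$ by $\Delta_v = a_{vv}^{-1}\big(b_v - \sum_{j} a_{vj}\hat{z}_j\big) = r_v/(1+d_v)$, which is exactly the update $\hat{z}_v \mathrel{+}= r_v/(1+d_v)$ in Algorithm~\ref{alg-localitr}. Propagating this change through $\rr \mapsto \rr - \Delta_v \AA \ee_v$ then yields $r_v \mapsto r_v - (1+d_v)\Delta_v = 0$, $r_u \mapsto r_u + \Delta_v$ for each $u \in N_v$, and no change for non-neighbors, matching line by line the residual updates of \textsc{BoundLocalIter}. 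Thus one push is one coordinate relaxation, both for $\hat{\zz}$ and for $\rr$.

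It remains to match the \emph{order} of updates. I would reuse the sweep decomposition from the proof of Theorem~\ref{the:runtime}: the dummy node partitions the push sequence into rounds, the $t$-th round relaxing exactly the active set $S^{(t)}$ (the nodes in the queue at the start of the round) once each, in FIFO order, while newly activated neighbors are appended behind the dummy and hence deferred to round $t+1$. Because each push uses the \emph{current} value of $\hat{\zz}$, a node processed earlier in round $t$ contributes its freshly updated value $\hat{z}_j^{(t)}$ to a later node's residual, whereas a node not yet processed still contributes $\hat{z}_j^{(t-1)}$; substituting the current-residual identity $\hat{z}_i^{(t)} = \hat{z}_i^{(t-1)} + r_i/(1+d_i)$ reproduces Eq.~\eqref{gauss} restricted to $i \in S^{(t)}$ under the stated reindexing. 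Equivalence then follows by induction on the rounds.

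The main obstacle I anticipate is not the per-step algebra but the careful bookkeeping that reconciles the asynchronous, threshold-gated queue with the indexed sweep formula: one must verify that within a single round each active node is relaxed exactly once and that re-activations are correctly pushed into the next round, so that \textsc{BoundLocalIter} realizes the \emph{adapted} (partial-sweep, active-set $S^{(t)}$) Gauss-Seidel of Eq.~\eqref{gauss} rather than the textbook full-sweep variant. Getting the ``most recently computed values'' of the Gauss-Seidel formula to line up with the push order is where the argument must be made precise.
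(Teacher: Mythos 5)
Your proposal is correct and follows essentially the same route as the paper: both rest on Lemma~\ref{lem-rela} to identify the algorithm's residual with the linear-system residual $\bb - \AA\hat{\zz}$, and then verify by entrywise algebra on $\AA = \II + \LL$ that a single push coincides with a single Gauss-Seidel coordinate relaxation, with the queue/round reindexing supplying the correspondence to the sweep order of Eq.~\eqref{gauss}. Your handling of the ordering (rounds via the dummy node, deferral of reactivated nodes to the next round) is somewhat more explicit than the paper's, which treats this implicitly through the reindexing convention stated just before the theorem, but the substance of the argument is the same.
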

    \begin{proof}
        Assuming at a certain moment $i$ during the $t$-th iteration, we let $\rr_i^{(t)}=(r_{1i}^{(t)},r_{2i}^{(t)},\allowbreak \ldots,r_{ni}^{(t)})^\top$ denote the residual vector for \textsc{BoundLocalIter} corresponding to the approximate solution vector $\hat{\zz}_i^{(t)}$ defined by $\hat{\zz}_i^{(t)}=(\hat{z}_1^{(t)},\hat{z}_2^{(t)},\allowbreak\ldots,\hat{z}_{i-1}^{(t)},\hat{z}_i^{(t-1)},\allowbreak \ldots,\hat{z}_n^{(t-1)})^\top$. According to Lemma~\ref{lem-rela}, we have:
        \begin{equation*}
            \rr^{(t)}_i = (\II + \LL)(\zz - \hat{\zz_i}) = \sss - (\II+\LL)\hat{\zz_i},
        \end{equation*}
        hence, let $\bb = \sss$ and $\AA = \II + \LL$, we demonstrate the $k$-th component of $\rr_i^{(t)}$ is 
        \begin{equation}\label{equ:guass_trans1}
            r_{ki}^{(t)}=b_k-\sum_{j=1}^{i-1}a_{kj}\hat{z}_j^{(t)}-\sum_{j=i}^na_{kj}\hat{z}_j^{(t-1)}.
        \end{equation}
        Specifically, when $k = i$, we have:
        \begin{equation*}
            r_{ii}^{(t)}=b_i-\sum_{j=1}^{i-1}a_{ij}\hat{z}_j^{(t)}-\sum_{j=i+1}^na_{ij}\hat{z}_j^{(t-1)} - a_{ii}\hat{z}_i^{(t-1)}.
        \end{equation*}
        By moving $\hat{z}_i^{(t-1)}$ to the left-hand side of the equation, we obtain:
        \begin{equation}\label{equ:guass_trans2}
             a_{ii}\hat{z}_i^{(t-1)} + r_{ii}^{(t)}=b_i-\sum_{j=1}^{i-1}a_{ij}\hat{z}_j^{(t)}-\sum_{j=i+1}^na_{ij}\hat{z}_j^{(t-1)}.
        \end{equation}
        Recall Algorithm~\ref{alg-localitr}, where $\hat{z}_i^{(t)} = \hat{z}_i^{(t-1)} + \frac{r_{ii}^{(t)}}{1+d_i}$, where $1/(1+d_i) = 1/a_{ii}$. Consequently, we ultimately obtain:
        \begin{equation*}
            \hat{z}_i^{(t)} = \hat{z}_i^{(t-1)} + \frac{1}{a_{ii}} r_{ii}^{(t)} = \frac{1}{a_{ii}}\left[b_i-\sum_{j=1}^{i-1}a_{ij}\hat{z}_j^{(t)}-\sum_{j=i+1}^na_{ij}\hat{z}_j^{(t-1)}\right],
        \end{equation*}
        which exhibits the same form as Eq.~\eqref{gauss}, thus completing the proof.
    \end{proof}

Theorem~\ref{the:gauss} implies that \textsc{BoundLocalIter} can be accelerated using techniques originally developed for the Gauss-Seidel method. A well-established approach for enhancing Gauss-Seidel convergence is the SOR method \cite{young1954iterative}. This technique incorporates a relaxation parameter $\omega$ to modulate update magnitudes between iterations. Optimal selection of $\omega$ yields substantially faster convergence relative to conventional Gauss-Seidel implementations. Applying this acceleration strategy to \textsc{BoundLocalIter}, the SOR-modified update for $\hat{z}^{(t)}$ becomes:
    \begin{equation*}
        \hat{z}_i^{(t)}=(1-\omega)\hat{z}_i^{(t-1)}+\frac{\omega}{a_{ii}}\left[b_i-\sum_{j=1}^{i-1}a_{ij}\hat{z}_j^{(t)}-\sum_{j=i+1}^na_{ij}\hat{z}_j^{(t-1)}\right],
    \end{equation*}
    where $\omega \in (0,2)$. By employing Eq.~\eqref{equ:guass_trans2}, the update of the estimator for the corresponding expressed opinion of node $i$ becomes:
    \begin{align}
        \hat{z}_i^{(t)}&=(1-\omega)\hat{z}_i^{(t-1)} +\frac{\omega}{a_{ii}} (a_{ii}\hat{z}_i^{(t-1)} + r_{ii}^{(t)})\notag\\
        &=\hat{z}_i^{(t-1)} + \frac{\omega}{a_{ii}} r_{ii}^{(t)} =\hat{z}_i^{(t-1)} + \frac{\omega}{1+d_i}\cdot r_{ii}^{(t)}. \label{equ:sor1}
    \end{align}
    Meanwhile, by utilizing Eq.~\eqref{equ:guass_trans1}, the update of the corresponding residual values becomes:
    \begin{align*}
        r_{k(i+1)}^{(t)} &=  b_k-\sum_{j=1}^{i}a_{kj}\hat{z}_j^{(t)}-\sum_{j=i+1}^na_{kj}\hat{z}_j^{(t-1)}\\
        & = b_k-\sum_{j=1}^{i-1}a_{kj}\hat{z}_j^{(t)}-\sum_{j=i}^na_{kj}\hat{z}_j^{(t-1)} - a_{ki}\hat{z}_i^{(t)} + a_{ki}\hat{z}_i^{(t-1)} \\
        & = r_{ki}^{(t)} -  \omega\cdot \frac{a_{ki}}{a_{ii}} r_{ii}^{(t)}.
    \end{align*}
    Note that when node $k$ does not belong to the inner neighbors of node $i$, the corresponding value $a_{ki}=0$. Conversely, if node $k$ belongs to the inner neighbors of $i$, the corresponding value $a_{ki}=-1$. Since $a_{ii} = 1+d_i$, we ultimately obtain:
    \begin{align}
        r_{i(i+1)}^{(t)} &= (1-  \omega)\cdot r_{ii}^{(t)},\label{equ:sor2}\\
        r_{k(i+1)}^{(t)} &= r_{ki}^{(t)} + \frac{\omega}{1+d_i}\cdot r_{ii}^{(t)},\quad \forall k \in N_i.\label{equ:sor3}
    \end{align}

    \begin{algorithm}[t]
        \caption{$\textsc{BoundLocalIterSOR}(\calG, \sss, \epsilon, \omega )$}
        \label{alg-sor}
        \Input{  A graph $\calG = (V, E)$, an internal opinion vector $\sss$, a relative error parameter $\epsilon$, a constant $\omega$}
        \Output{Estimated equilibrium opinion vector $\hat{\zz}$}
        {
        \textbf{Initialize} :
            $\hat{\zz} = \zero;\ \rr = \sss$; a FIFO-Queue $Q$\\
        }
        \For{ $v \in V$}{
            $Q\text{.append}(v)$\\
        }
        \While{$Q \neq \emptyset$}{
            $v = Q\text{.pop}()$\\
            $\hat{z}_v = \hat{z}_v + \frac{\omega}{1+d_v}r_v$\\
            \For{each $u \in N_v$}{
                $r_u = r_u +\frac{\omega}{1+d_v} r_v$\\
                \If{$|r_u| > \epsilon s_u$ and $u \notin Q$}{\label{line:1con}
                    $Q\text{.append}(u)$\\
                }
            }
            $r_v = (1-\omega) \cdot r_v$\\
            \If{$|r_v| > \epsilon s_v$ and $v \notin Q$}{\label{line:2con}
                $Q\text{.append}(v)$\\
            }
        }
        \textbf{return} $\hat{\zz}$ \\
    \end{algorithm}
    
     Building upon the over-relaxed formulations in Eqs. ~\eqref{equ:sor1}, \eqref{equ:sor2} and \eqref{equ:sor3}, we develop \textsc{BoundLocalIterSOR} , as detailed in Algorithm \ref{alg-sor}, which maintains implementation simplicity by requiring only the relaxation parameter $\omega$ as input. Notably, this generalized algorithm reduces to standard \textsc{BoundLocalIter} when $\omega = 1$. By removing the non-negativity constraint on elements of $r_v$, our approach achieves both computational acceleration and preservation of relative error guarantees, as formalized below:

    \begin{lemma}\label{lem-sor_error}
        For any parameter $\epsilon \in (0,1)$, the estimator $\hat{\zz}$ returned by Algorithm~\ref{alg-sor} satisfies the following relation:
        \begin{equation}
            (1-\epsilon)z_v \le \hat{z}_v \le (1+\epsilon)z_v, \forall v \in V.
        \end{equation}
    \end{lemma}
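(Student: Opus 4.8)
The plan is to mirror the two-step argument behind Lemma~\ref{lem-error}, adjusting for the relaxation parameter $\omega$ and, crucially, for the fact that Algorithm~\ref{alg-sor} permits negative residuals. First I would re-establish that the invariant $\zz - \hat{\zz} = (\II + \LL)^{-1}\rr$ of Lemma~\ref{lem-rela} survives the over-relaxed updates. The base case is untouched, since the initialization $\hat{\zz} = \zero$, $\rr = \sss$ is identical. For the inductive step I would recast the scalar updates of Eqs.~\eqref{equ:sor1}, \eqref{equ:sor2} and~\eqref{equ:sor3} as a single vectorized push at node $i$:
\begin{equation*}
    \hat{\zz}^{(t+1)} = \hat{\zz}^{(t)} + \frac{\omega r^{(t)}_i}{1+d_i}\ee_i, \qquad \rr^{(t+1)} = \rr^{(t)} - \frac{\omega r^{(t)}_i}{1+d_i}(\II + \LL)\ee_i.
\end{equation*}
The residual identity here is that subtracting $\omega r_i$ at node $i$ while adding $\tfrac{\omega r_i}{1+d_i}$ to each neighbor equals subtracting $\tfrac{\omega r_i}{1+d_i}(\II + \LL)\ee_i$, which follows from $(\II + \LL)\ee_i = (1+d_i)\ee_i - \AA\ee_i$. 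Substituting these into $\zz - \hat{\zz}^{(t+1)}$ and applying the inductive hypothesis reproduces the cancellation of Lemma~\ref{lem-rela} line for line, so the $\omega$-factor cancels and only reappears as the step size in the $\hat{\zz}$ update.

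Next I would extract the termination condition. When the queue empties, the guards in Lines~\ref{line:1con} and~\ref{line:2con} of Algorithm~\ref{alg-sor} force $|r_v| \le \epsilon s_v$ for every $v \in V$, i.e.\ $-\epsilon\sss \le \rr \le \epsilon\sss$ componentwise. This is the one genuine departure from Lemma~\ref{lem-error}: because the over-relaxed update can drive residuals negative, I may no longer invoke $\rr \ge \zero$ to obtain the sharp one-sided bound $\hat{z}_v \le z_v$, and must instead control $\rr$ from both sides.

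Finally I would propagate these two-sided bounds through the fundamental matrix. Since $(\II + \LL)^{-1}$ is entrywise nonnegative (indeed positive on a connected graph, as recalled in the preliminaries), it preserves componentwise order. From $\rr \le \epsilon\sss$ I obtain $\zz - \hat{\zz} = (\II+\LL)^{-1}\rr \le \epsilon(\II+\LL)^{-1}\sss = \epsilon\zz$, hence $(1-\epsilon)z_v \le \hat{z}_v$; from $\rr \ge -\epsilon\sss$ I obtain $\zz - \hat{\zz} \ge -\epsilon\zz$, hence $\hat{z}_v \le (1+\epsilon)z_v$. Combining the two yields the claimed $(1-\epsilon)z_v \le \hat{z}_v \le (1+\epsilon)z_v$ for all $v \in V$.

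The main obstacle is conceptual rather than computational: I must verify that the invariant of Lemma~\ref{lem-rela} is truly unaffected by relaxation, so that the $\omega$ contributed to the neighbors and the $\omega$ subtracted at node $i$ reassemble exactly into $\tfrac{\omega r_i}{1+d_i}(\II + \LL)\ee_i$. Once that identity is confirmed, the symmetric error bound is essentially forced: the sign information on $\rr$ lost through over-relaxation is precisely what loosens the upper estimate from $z_v$ to $(1+\epsilon)z_v$, while the nonnegativity of $(\II + \LL)^{-1}$ supplies the matching inequality in each direction.
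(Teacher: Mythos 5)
Your proof is correct and follows essentially the same route as the paper's: the termination guards give $|\rr| \le \epsilon \sss$ componentwise, and the entrywise positivity of $(\II+\LL)^{-1}$ turns this into the two-sided bound $|\zz - \hat{\zz}| = |(\II+\LL)^{-1}\rr| \le \epsilon \zz$, which is exactly the claimed relation. Your explicit inductive re-verification that the invariant $\zz - \hat{\zz} = (\II+\LL)^{-1}\rr$ survives the $\omega$-relaxed push is a detail the paper leaves implicit (there it follows from the definition of the residual $\rr = \sss - (\II+\LL)\hat{\zz}$ in the Gauss--Seidel/SOR derivation), but it does not change the argument.
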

    \begin{proof}
       The condition specified in Lines~\ref{line:1con} and~\ref{line:2con} of Algorithm~\ref{alg-sor} demonstrates that upon termination, the residual satisfies $|\mathbf{r}| \leq \epsilon \mathbf{s}$. Given that matrix $(\II +\LL)^{-1}$ is positive matrix, we consequently derive the following:
        \begin{equation}
            |\zz - \hat{\zz}| =|(\II+\LL)^{-1} \rr|  \le\epsilon (\II+\LL)^{-1} \sss  = \epsilon  \zz.
        \end{equation}
        The equation above implies that $(1-\epsilon)z_v \leq \hat{z}_v \leq (1+\epsilon)z_v$ for all $v \in V$, as desired.
    \end{proof}

    Similar to the limitations observed in \textsc{BoundLocalIter}, \textsc{BoundLocalIterSOR} may also encounter challenges in ensuring termination under certain conditions, particularly for nodes with internal opinion close to zero. To address this issue, we propose an enhanced version of the algorithm, referred to as \textsc{ImprovedBLISOR}, which incorporates the same methodological improvement introduced in Algorithm~\ref{alg-improved} and maintains identical error guarantees.

    \textbf{Parameter Choosing for $\omega$.} 
    It is worth noting that the matrix $\II+\LL$ is symmetric and positive-definite for undirected graphs, ensuring that SOR converges for any $\omega \in (0, 2)$. Let $\mu = \rho((\II+\DD)^{-1}\AA)$, where $\rho(\AA)$ denotes the spectral radius of matrix $\AA$, the optimal value of $\omega$~\cite{hackbusch1994iterative} is given by
    \begin{equation}
    \omega_{\text{opt}} = 1+\left(\frac{\mu}{1+\sqrt{1-\mu^2}}\right)^2. \label{equ:optimal-omega}
    \end{equation}
    In practical networks, the value of $\mu$ is often difficult to compute. Nevertheless, a heuristic approach can be employed to determine $\omega$: Begin with \(\omega \to 2^-\) and iteratively reduce \(\omega\) by a fixed step size until it reaches \(1\). The value of \(\omega\) that achieves the \emph{minimum number of updates} is then determined. In practical scenarios, the parameter \(\omega\) in undirected graphs is typically set to 1.5, which results in an acceleration of more than approximately three times compared to the \textsc{BoundLocalIter} method.

\section{Experiments}\label{sec:experim}

    In this section, we conduct experiments on 8 real-world benchmark graphs to evaluate our proposed algorithms, all of which are undirected graphs. For directed graphs, relevant experiments are included in Appendix~\ref{appdix:exp_di}. For the sake of convenience, in the following text, we will abbreviate the algorithm \textsc{ImprovedBLI} as \textsc{BLI}, the algorithm \textsc{ImprovedBLISOR} as \textsc{BLISOR}, and the algorithm \textsc{Laplacian Solver} as \textsc{Solver}.

    \textbf{Machine.} Our extensive  experiments  were conducted on a machine equipped with 12-core 1.7GHz  Intel i5-1240P CPU and 40GB of main memory. The code for all our algorithms is implemented using \textit{Julia v1.10.7}.  The  solver in \textsc{Solver} we use is based on the technique in~\cite{kyng2016approximate}.

\begin{table*}[t!]
    \caption{Statistics of real-world networks used in our experiments and comparison of running time (seconds) between \textsc{RWB}, \textsc{Solver}, \textsc{BLI}, and \textsc{BLISOR}. \textsc{RWB} and \textsc{Solver} are evaluated only under uniform distribution (Unif), while \textsc{BLI} and \textsc{BLISOR} are tested under Unif, Exp (exponential distribution), and Pareto (power-law distribution).} 
    \label{tab:running-times_ungraph}
    \centering
    \begin{adjustbox}{max width=\textwidth}
    \begin{tabular}{cccc cc ccc ccc c}
    \toprule
        \multicolumn{4}{c}{Dataset}
        & \multicolumn{8}{c}{Time}
        & \multirow{3}{*}{$\omega$}\\
    \cmidrule(lr){1-4}
    \cmidrule(lr){5-12}
    
    \multirow{2}{*}{name}&\multirow{2}{*}{$n$}&\multirow{2}{*}{$m$}&\multirow{2}{*}{$d_{\max}$}
    & \multirow{2}{*}{{\textsc{RWB}}} 
    & \multirow{2}{*}{{\textsc{Solver}}} 
    &  \multicolumn{3}{c}{{\textsc{BLI}}}
    & \multicolumn{3}{c}{{\textsc{BLISOR}}} & \\ 
    \cmidrule(lr){7-9}
    \cmidrule(lr){10-12}
    &&&&&& Unif & Exp & Pareto
    & Unif & Exp & Pareto &\\
    \midrule
    Hamsterster & 2,426&16,630&273 
    & 2.74 & 4.02 
    & 7e-3 & 8e-3 & 8e-3
    & 3e-3 & 3e-3 & 3e-3
    &1.5\\
    Email-Enron & 33,696& 180,811 & 1,383
    & 35.67 & 4.61
    & 0.09 & 0.08 & 0.11 
    & 0.02 & 0.02 & 0.02
    &1.6 \\
    DBLP & 317,080& 1,049,866 &343 
    & 283.80 &  5.80 
    & 0.84 & 0.80 & 0.88
    & 0.18 & 0.18 & 0.22
    &1.5 \\
    YoutubeSnap &1,134,890& 2,987,624& 28,754
    & 1912.85 & 8.66 
    & 3.07 & 2.88 & 3.28 
    & 0.68 & 0.69 & 0.85
    & 1.5 \\
    Flixster &2,523,386& 7,918,801& 1,474
    & 6964.04 & 20.81 
    & 7.54 & 7.16 & 7.10
    & 2.07 & 2.00 & 2.19
    & 1.55\\
    Orkut &3,072,441& 117,185,083& 33,313
    & 170815.42 & - 
    & 990.18 & 1065.41 & 1082.50
    & 206.34 & 221.11 & 218.51
    &1.65\\
    LiveJournal &4,033,137& 27,933,062& 2,651 
    & 28048.85 & 76.32 
    & 69.35 & 70.39 & 70.21
    & 18.54 & 19.16 & 18.48
    &1.5 \\
    SinaWeibo &58,655,849& 261,321,071& 278,491
    & 449274.47 &  - 
    & 924.45 & 957.79 & 921.55
    & 285.8  & 300.66 & 253.25
    &1.5 \\
    \bottomrule
    \end{tabular}
    
    \end{adjustbox}
\end{table*}

\begin{table*}[t!]
    \caption{Relative error for estimated $\zz,\Controversy,\Disagreement,\Internal,\Polarization$ under uniform distribution, comparing \textsc{RWB}, \textsc{BLI}, and \textsc{BLISOR}.}
    \label{tab:measure_li}
    \centering
    \begin{adjustbox}{max width=\textwidth}
    \begin{tabular}{c ccccc ccccc ccccc}
    \toprule
        \multirow{3}{*}{Dataset}
        &\multicolumn{15}{c}{Relative error of equilibrium opinion and four estimated quantities for three algorithms in uniform distribution($\times 10^{-2}$)}\\
    \cmidrule(lr){2-16}
    & \multicolumn{5}{c}{\textsc{RWB}} & \multicolumn{5}{c}{BLI}&\multicolumn{5}{c}{BLISOR}\\
    \cmidrule(lr){2-6}
    \cmidrule(lr){7-11}
    \cmidrule(lr){12-16}
    & $\zz$ &$\Controversy$&$\Disagreement$&$\Internal$&$\Polarization$
    & $\zz$ &$\Controversy$&$\Disagreement$&$\Internal$&$\Polarization$
    & $\zz$ &$\Controversy$&$\Disagreement$&$\Internal$&$\Polarization$\\
    \midrule
    Hamsterster  & 7.2520 & 0.0477 & 0.1963 & 0.3877 & 1.0447
                 & 0.8676 & 0.9284 & 1.0806 & 0.1820 & 0.8627
                 & 0.7536 & 0.2793 & 0.3741 & 0.0618 & 0.0856\\
    Email-Enron  & 6.7352 & 0.1199 & 7.8120 & 0.1359 & 0.6580
                 & 0.8128 & 0.9482 & 0.9766 & 0.3076 & 0.9751
                 & 0.6381 & 0.1194 & 0.1995 & 0.0546 & 0.2218\\
    DBLP         & 7.0512 & 0.4913 & 6.0803 & 0.0178 & 1.1419
                 & 0.8621 & 0.9541 & 1.0092 & 0.2978 & 0.9781
                 & 0.7472 & 0.1274 & 0.0936 & 0.0296 & 0.0451\\
    YoutubeSnap  & 8.1089 & 0.0782 & 0.6488 & 2.4370 & 2.0180
                 & 0.8661 & 0.9530 & 1.0027 & 0.5167 & 0.9970
                 & 0.7665 & 0.5077 & 0.1214 & 0.0520 & 0.1128\\
    Flixster     & 7.8437 & 0.1965 & 4.5266 & 0.7881 & 0.7838
                 & 0.8510 & 0.9566 & 1.0049 & 0.5564 & 1.0064
                 & 0.7380 & 0.0610 & 0.0611 & 0.0424 & 0.1267\\
    LiveJournal  & 14.8626 & 0.0836 & 9.4652 & 0.7887 & 0.0804
                 & 0.8773 & 0.9605 & 1.0290 & 0.2445 & 1.0527
                 & 0.8066 & 0.3117 & 0.2608 & 0.0617 & 0.1671\\
    \bottomrule
    \end{tabular}
    
    \end{adjustbox}
\end{table*}

    \textbf{Datasets.} All the real-world networks we consider are  publicly available in the Koblenz Network Collection~\cite{kunegis2013konect}, SNAP~\cite{leskovec2016snap} and Network Repository~\cite{rossi2015network}. The first four columns of Table~\ref{tab:running-times_ungraph} are related information of networks, including  the network name, the number of nodes, the number of edges, and the maximum node degree. The smallest network consists of 2426 nodes, while the largest network  has more than ten million. In Table~\ref{tab:running-times_ungraph}, the networks listed in an increasing order of the number of nodes.

    \textbf{Internal Opinion Distributions.} In our experimental framework, we generate internal node opinions using three distinct probability distributions: uniform, exponential, and power-law. These distributions are selected to model fundamentally different scenarios of opinion formation in networked systems. For the uniform distribution, each node's opinion value is independently drawn from a uniform distribution over the interval $[0,1]$. The exponential distribution implementation involves generating positive opinion values according to the probability density function $f(x) = \mathrm{e}^{x_{\min}} \mathrm{e}^{-x}$ with  $x_{\min} > 0$, which we then normalize to the unit interval by dividing each value by the maximum observed sample. For modeling scenarios where opinions follow a heavy-tailed distribution, we employ a power-law distribution with exponent $\alpha = 2.5$, implemented through the probability density function $f(x) = (\alpha-1)x_{\min}^{\alpha-1}x^{-\alpha}$. Similar to the exponential case, the generated values undergo normalization to the $[0,1]$ range, preserving the distribution's characteristic property where a small number of nodes hold significantly higher opinions than the majority.

    \textbf{Baseline.} We compare against the algorithm \textsc{RWB} and algorithm \textsc{Solver}. For the \textsc{RWB} algorithm, we conducted tests on all nodes, and as mentioned in~\cite{neumann2024sublinear}, the RWB algorithm performs well in single node or small batch computing. Similar to ~\cite{neumann2024sublinear}, we set the step size to 600 and the number of simulations to 4000. For the algorithm \textsc{Solver}, We run their algorithm with $\epsilon =10^{-6}$. We do not compare against an exact baseline, since the experiments in~\cite{xu2021fast} show that their algorithm has a negligible error in practice and since the exact computation is infeasible for our large datasets (in the experiments of~\cite{xu2021fast}, their algorithm’s relative error is less than $10^{-6}$ and matrix inversion does not scale to graphs with more than $56,000$ nodes).

    \textbf{Experiment Settings.} Eq.~\eqref{equ:time1} represents a function of time complexity with a constant $c > 0$. Theoretically, we can analyze this function to determine its minimum value over the positive interval. In practice, we set $c=10^{-5}$ and simultaneously set $\sigma = 10^{-5}$. For the parameter $\omega$, we conducted tests within the interval $[1, 2)$ with a step size of $0.5$, aiming to select the optimal $\omega$ as precisely as possible. It should be noted that, as shown in Figure~\ref{fig:wpara} of Appendix~\ref{appdix:exp_undi}, for undirected graphs, setting $\omega=1.5$ can achieve approximately optimal results.

\begin{figure}[t]
    \centering
    \includegraphics[width=\linewidth]{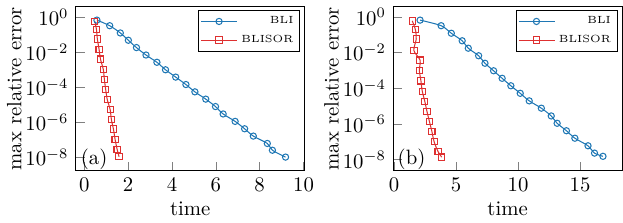}
    \caption{Actual relative-error v.s. execution time (seconds) on two real-world datasets: (a) YoutubeSnap, (b) Flixster. }
    \label{fig:timeerror}
\end{figure} 

    \textbf{Efficiency.} Table~\ref{tab:running-times_ungraph} reports the running times of algorithms \textsc{BLI}, \textsc{BLISOR}, \textsc{RWB} and \textsc{Solver} on various networks used in our experiments. We note that due to the high memory and time costs, we were unable to compute the relevant quantities for the Orkut and SinaWeibo networks using the \textsc{Solver} algorithm, but we were able to run the \textsc{BLI} and \textsc{BLISOR} algorithms. Since both \textsc{BLI} and \textsc{BLISOR} provide relative error guarantees, we set the error parameter $\epsilon = 10^{-2}$ for most of our experiments, which suffices for most practical scenarios. For algorithms \textsc{RWB} and \textsc{Solver}, we only present the running times under the uniform distribution, as their operational efficiency is almost unaffected by the internal opinions. For the three internal opinion distributions across different networks, we recorded the running times of the algorithms \textsc{BLI} and \textsc{BLISOR}. Table~\ref{tab:running-times_ungraph} shows that for all considered networks, the running time of \textsc{BLI} is less than that of \textsc{RWB} and \textsc{Solver}, while \textsc{BLISOR} is approximately three times faster than \textsc{BLI} on almost all networks. For all networks with fewer than ten million edges, \textsc{BLI} demonstrates a significant improvement in running time compared to \textsc{Solver}. For instance, on the Email-Enron network, the running time of \textsc{BLI} is less than one percent of that of \textsc{Solver}. Finally, for extremely large graphs, particularly those with hundreds of millions of edges, our proposed algorithms exhibit a substantial efficiency advantage, as \textsc{Solver} cannot be executed on these networks. Figure~\ref{fig:timeerror} illustrates the relationship between the true relative error and time for algorithms \textsc{BLI} and \textsc{BLISOR} on the YoutubeSnap and Flixster networks, demonstrating that the maximum relative error of our algorithms decays exponentially over time, with \textsc{BLISOR} converging faster than \textsc{BLI}.

    \textbf{Accuracy.} Besides efficiency, our algorithms \textsc{BLI} and \textsc{BLISOR} also achieve high precision. We evaluate their accuracy under uniform distribution in Table~\ref{tab:measure_li}, using \textsc{RWB} as the baseline. Results for exponential and power-law distributions are in Appendix~\ref{appdix:exp_undi}. For each internal opinion distribution, we compare approximate results from our algorithms with exact solutions. For Hamsterster and Email-Enron networks, exact solutions are derived via matrix inversion; for others, we use \textsc{Solver}'s results (precision $10^{-6}$~\cite{xu2021fast}), except for Orkut and SinaWeibo in Table~\ref{tab:running-times_ungraph}. For each quantity $\rho$, we compute the relative error $\sigma = |\rho - \tilde{\rho}| / \rho$. Table~\ref{tab:measure_li} shows the maximum relative error of equilibrium opinion vector $\zz$ and errors for four estimators: $\Controversy$, $\Disagreement$, $\Internal$, and $\Polarization$. Results show that when $\epsilon=10^{-2}$, all four estimators exhibit highly consistent relative errors, all below $0.011$. Moreover, \textsc{BLISOR} generally outperforms \textsc{BLI}, and the maximum relative error of equilibrium opinions is constrained below $10^{-2}$, surpassing \textsc{RWB}.

\section{Conclusion}\label{sec:conclusion}
    In this paper, we tackled the challenge of efficiently computing the equilibrium opinion vector and related metrics in both directed and undirected social networks under the Friedkin-Johnsen model. Traditional methods like matrix inversion are hindered by high computational or spatial complexity, making them unsuitable for large-scale networks. To address this, we introduced a novel local iteration-based algorithm that approximates the equilibrium opinion vector with guaranteed relative error bounds. Our algorithm uses asynchronous push operations to iteratively refine the solution, significantly improving efficiency and scalability compared to methods like the Laplacian solver. We enhanced robustness by adding a constant $c$ to ensure convergence when nodes have zero internal opinions and optimized it with Successive Over-Relaxation techniques to speed up convergence. Extensive experiments on real-world networks showed that our algorithm is highly efficient and scalable, especially for large-scale networks with tens of millions of nodes.

\begin{acks}
The work was supported by  the National Natural Science Foundation of China (No. 62372112). G.Y. Wang was also supported by Fudan's Undergraduate Research Opportunities Program, FDUROP under Grant No. 23944.  
\end{acks}

\newpage
\bibliographystyle{ACM-Reference-Format}
\bibliography{main}


\begin{thebibliography}{41}


\ifx \showCODEN    \undefined \def \showCODEN     #1{\unskip}     \fi
\ifx \showDOI      \undefined \def \showDOI       #1{#1}\fi
\ifx \showISBNx    \undefined \def \showISBNx     #1{\unskip}     \fi
\ifx \showISBNxiii \undefined \def \showISBNxiii  #1{\unskip}     \fi
\ifx \showISSN     \undefined \def \showISSN      #1{\unskip}     \fi
\ifx \showLCCN     \undefined \def \showLCCN      #1{\unskip}     \fi
\ifx \shownote     \undefined \def \shownote      #1{#1}          \fi
\ifx \showarticletitle \undefined \def \showarticletitle #1{#1}   \fi
\ifx \showURL      \undefined \def \showURL       {\relax}        \fi
\providecommand\bibfield[2]{#2}
\providecommand\bibinfo[2]{#2}
\providecommand\natexlab[1]{#1}
\providecommand\showeprint[2][]{arXiv:#2}

\bibitem[Abebe et~al\mbox{.}(2018)]%
        {abebe2018opinion}
\bibfield{author}{\bibinfo{person}{Rediet Abebe}, \bibinfo{person}{Jon
  Kleinberg}, \bibinfo{person}{David Parkes}, {and}
  \bibinfo{person}{Charalampos~E. Tsourakakis}.}
  \bibinfo{year}{2018}\natexlab{}.
\newblock \showarticletitle{Opinion dynamics with varying susceptibility to
  persuasion}. In \bibinfo{booktitle}{\emph{Proceedings of the 24th ACM SIGKDD
  International Conference on Knowledge Discovery \& Data Mining}}.
  \bibinfo{pages}{1089–1098}.
\newblock
\showISBNx{9781450355520}


\bibitem[Anderson and Ye(2019)]%
        {anderson2019recent}
\bibfield{author}{\bibinfo{person}{Brian~DO Anderson} {and}
  \bibinfo{person}{Mengbin Ye}.} \bibinfo{year}{2019}\natexlab{}.
\newblock \showarticletitle{Recent advances in the modelling and analysis of
  opinion dynamics on influence networks}.
\newblock \bibinfo{journal}{\emph{International Journal of Automation and
  Computing}} \bibinfo{volume}{16}, \bibinfo{number}{2} (\bibinfo{year}{2019}),
  \bibinfo{pages}{129--149}.
\newblock


\bibitem[Avena and Gaudilli{\`e}re(2018)]%
        {avena2018two}
\bibfield{author}{\bibinfo{person}{Luca Avena} {and} \bibinfo{person}{Alexandre
  Gaudilli{\`e}re}.} \bibinfo{year}{2018}\natexlab{}.
\newblock \showarticletitle{Two applications of random spanning forests}.
\newblock \bibinfo{journal}{\emph{Journal of Theoretical Probability}}
  \bibinfo{volume}{31}, \bibinfo{number}{4} (\bibinfo{year}{2018}),
  \bibinfo{pages}{1975--2004}.
\newblock


\bibitem[Bernardo et~al\mbox{.}(2021)]%
        {BeWaVaHoShAl21}
\bibfield{author}{\bibinfo{person}{Carmela Bernardo}, \bibinfo{person}{Lingfei
  Wang}, \bibinfo{person}{Francesco Vasca}, \bibinfo{person}{Yiguang Hong},
  \bibinfo{person}{Guodong Shi}, {and} \bibinfo{person}{Claudio Altafini}.}
  \bibinfo{year}{2021}\natexlab{}.
\newblock \showarticletitle{Achieving consensus in multilateral international
  negotiations: The case study of the 2015 Paris Agreement on climate change}.
\newblock \bibinfo{journal}{\emph{Science Advances}} \bibinfo{volume}{7},
  \bibinfo{number}{51} (\bibinfo{year}{2021}), \bibinfo{pages}{eabg8068}.
\newblock


\bibitem[Bindel et~al\mbox{.}(2015)]%
        {BiKlOr15}
\bibfield{author}{\bibinfo{person}{David Bindel}, \bibinfo{person}{Jon
  Kleinberg}, {and} \bibinfo{person}{Sigal Oren}.}
  \bibinfo{year}{2015}\natexlab{}.
\newblock \showarticletitle{How bad is forming your own opinion?}
\newblock \bibinfo{journal}{\emph{Games and Economic Behavior}}
  \bibinfo{volume}{92} (\bibinfo{year}{2015}), \bibinfo{pages}{248--265}.
\newblock


\bibitem[Chan et~al\mbox{.}(2019)]%
        {chan2019revisiting}
\bibfield{author}{\bibinfo{person}{T-H.~Hubert Chan}, \bibinfo{person}{Zhibin
  Liang}, {and} \bibinfo{person}{Mauro Sozio}.}
  \bibinfo{year}{2019}\natexlab{}.
\newblock \showarticletitle{Revisiting opinion dynamics with varying
  susceptibility to persuasion via non-convex local search}. In
  \bibinfo{booktitle}{\emph{Proceedings of the World Wide Web Conference}}.
  \bibinfo{pages}{173–183}.
\newblock
\showISBNx{9781450366748}


\bibitem[Chebotarev and Shamis(1997)]%
        {ChSh97}
\bibfield{author}{\bibinfo{person}{P.~Yu Chebotarev} {and}
  \bibinfo{person}{E.~V. Shamis}.} \bibinfo{year}{1997}\natexlab{}.
\newblock \showarticletitle{The matrix-forest theorem and measuring relations
  in small social groups}.
\newblock \bibinfo{journal}{\emph{Automation and Remote Control}}
  \bibinfo{volume}{58}, \bibinfo{number}{9} (\bibinfo{year}{1997}),
  \bibinfo{pages}{1505--1514}.
\newblock


\bibitem[Chebotarev and Shamis(1998)]%
        {ChSh98}
\bibfield{author}{\bibinfo{person}{P.~Yu Chebotarev} {and}
  \bibinfo{person}{E.~V. Shamis}.} \bibinfo{year}{1998}\natexlab{}.
\newblock \showarticletitle{On proximity measures for graph vertices}.
\newblock \bibinfo{journal}{\emph{Automation and Remote Control}}
  \bibinfo{volume}{59}, \bibinfo{number}{10} (\bibinfo{year}{1998}),
  \bibinfo{pages}{1443--1459}.
\newblock


\bibitem[Chen and R{\'a}cz(2021)]%
        {chen2021adversarial}
\bibfield{author}{\bibinfo{person}{Mayee~F Chen} {and}
  \bibinfo{person}{Mikl{\'o}s~Z R{\'a}cz}.} \bibinfo{year}{2021}\natexlab{}.
\newblock \showarticletitle{An adversarial model of network disruption:
  Maximizing disagreement and polarization in social networks}.
\newblock \bibinfo{journal}{\emph{IEEE Transactions on Network Science and
  Engineering}} \bibinfo{volume}{9}, \bibinfo{number}{2}
  (\bibinfo{year}{2021}), \bibinfo{pages}{728--739}.
\newblock


\bibitem[Chen et~al\mbox{.}(2018)]%
        {chen2018quantifying}
\bibfield{author}{\bibinfo{person}{Xi Chen}, \bibinfo{person}{Jefrey Lijffijt},
  {and} \bibinfo{person}{Tijl De~Bie}.} \bibinfo{year}{2018}\natexlab{}.
\newblock \showarticletitle{Quantifying and minimizing risk of conflict in
  social networks}. In \bibinfo{booktitle}{\emph{Proceedings of the 24th ACM
  SIGKDD International Conference on Knowledge Discovery \& Data Mining}}.
  \bibinfo{pages}{1197--1205}.
\newblock


\bibitem[Dandekar et~al\mbox{.}(2013)]%
        {dandekar2013biased}
\bibfield{author}{\bibinfo{person}{Pranav Dandekar}, \bibinfo{person}{Ashish
  Goel}, {and} \bibinfo{person}{David~T Lee}.} \bibinfo{year}{2013}\natexlab{}.
\newblock \showarticletitle{Biased assimilation, homophily, and the dynamics of
  polarization}.
\newblock \bibinfo{journal}{\emph{Proceedings of the National Academy of
  Sciences}} \bibinfo{volume}{110}, \bibinfo{number}{15},
  \bibinfo{pages}{5791--5796}.
\newblock


\bibitem[Das et~al\mbox{.}(2013)]%
        {das2013debiasing}
\bibfield{author}{\bibinfo{person}{Abhimanyu Das}, \bibinfo{person}{Sreenivas
  Gollapudi}, \bibinfo{person}{Rina Panigrahy}, {and} \bibinfo{person}{Mahyar
  Salek}.} \bibinfo{year}{2013}\natexlab{}.
\newblock \showarticletitle{Debiasing social wisdom}. In
  \bibinfo{booktitle}{\emph{Proceedings of the 19th ACM SIGKDD International
  Conference on Knowledge Discovery and Data Mining}}.
  \bibinfo{pages}{500–508}.
\newblock
\showISBNx{9781450321747}


\bibitem[Dong et~al\mbox{.}(2018)]%
        {dong2018survey}
\bibfield{author}{\bibinfo{person}{Yucheng Dong}, \bibinfo{person}{Min Zhan},
  \bibinfo{person}{Gang Kou}, \bibinfo{person}{Zhaogang Ding}, {and}
  \bibinfo{person}{Haiming Liang}.} \bibinfo{year}{2018}\natexlab{}.
\newblock \showarticletitle{A survey on the fusion process in opinion
  dynamics}.
\newblock \bibinfo{journal}{\emph{Information Fusion}}  \bibinfo{volume}{43}
  (\bibinfo{year}{2018}), \bibinfo{pages}{57--65}.
\newblock


\bibitem[Friedkin and Johnsen(1990)]%
        {FrJo90}
\bibfield{author}{\bibinfo{person}{Noah~E Friedkin} {and}
  \bibinfo{person}{Eugene~C Johnsen}.} \bibinfo{year}{1990}\natexlab{}.
\newblock \showarticletitle{Social influence and opinions}.
\newblock \bibinfo{journal}{\emph{Journal of Mathematical Sociology}}
  \bibinfo{volume}{15}, \bibinfo{number}{3-4} (\bibinfo{year}{1990}),
  \bibinfo{pages}{193--206}.
\newblock


\bibitem[Friedkin et~al\mbox{.}(2016)]%
        {friedkin2016network}
\bibfield{author}{\bibinfo{person}{Noah~E Friedkin}, \bibinfo{person}{Anton~V
  Proskurnikov}, \bibinfo{person}{Roberto Tempo}, {and}
  \bibinfo{person}{Sergey~E Parsegov}.} \bibinfo{year}{2016}\natexlab{}.
\newblock \showarticletitle{Network science on belief system dynamics under
  logic constraints}.
\newblock \bibinfo{journal}{\emph{Science}} \bibinfo{volume}{354},
  \bibinfo{number}{6310} (\bibinfo{year}{2016}), \bibinfo{pages}{321--326}.
\newblock


\bibitem[Gaitonde et~al\mbox{.}(2020)]%
        {gaitonde2020adversarial}
\bibfield{author}{\bibinfo{person}{Jason Gaitonde}, \bibinfo{person}{Jon
  Kleinberg}, {and} \bibinfo{person}{Eva Tardos}.}
  \bibinfo{year}{2020}\natexlab{}.
\newblock \showarticletitle{Adversarial perturbations of opinion dynamics in
  networks}. In \bibinfo{booktitle}{\emph{Proceedings of the 21st ACM
  Conference on Economics and Computation}}. \bibinfo{pages}{471--472}.
\newblock


\bibitem[Ghaderi and Srikant(2014)]%
        {ghaderi2014opinion}
\bibfield{author}{\bibinfo{person}{Javad Ghaderi} {and}
  \bibinfo{person}{Rayadurgam Srikant}.} \bibinfo{year}{2014}\natexlab{}.
\newblock \showarticletitle{Opinion dynamics in social networks with stubborn
  agents: Equilibrium and convergence rate}.
\newblock \bibinfo{journal}{\emph{Automatica}} \bibinfo{volume}{50},
  \bibinfo{number}{12} (\bibinfo{year}{2014}), \bibinfo{pages}{3209--3215}.
\newblock


\bibitem[Gionis et~al\mbox{.}(2013)]%
        {gionis2013opinion}
\bibfield{author}{\bibinfo{person}{Aristides Gionis}, \bibinfo{person}{Evimaria
  Terzi}, {and} \bibinfo{person}{Panayiotis Tsaparas}.}
  \bibinfo{year}{2013}\natexlab{}.
\newblock \showarticletitle{Opinion maximization in social networks}. In
  \bibinfo{booktitle}{\emph{Proceedings of the 2013 SIAM International
  Conference on Data Mining}}. \bibinfo{pages}{387--395}.
\newblock


\bibitem[Hackbusch(1994)]%
        {hackbusch1994iterative}
\bibfield{author}{\bibinfo{person}{Wolfgang Hackbusch}.}
  \bibinfo{year}{1994}\natexlab{}.
\newblock \bibinfo{booktitle}{\emph{Iterative Solution of Large Sparse Systems
  of Equations}}. Vol.~\bibinfo{volume}{95}.
\newblock


\bibitem[Hoeffding(1994)]%
        {hoeffding1994probability}
\bibfield{author}{\bibinfo{person}{Wassily Hoeffding}.}
  \bibinfo{year}{1994}\natexlab{}.
\newblock \showarticletitle{Probability inequalities for sums of bounded random
  variables}.
\newblock \bibinfo{journal}{\emph{The collected works of Wassily Hoeffding}}
  (\bibinfo{year}{1994}), \bibinfo{pages}{409--426}.
\newblock


\bibitem[Jia et~al\mbox{.}(2015)]%
        {jia2015opinion}
\bibfield{author}{\bibinfo{person}{Peng Jia}, \bibinfo{person}{Anahita
  MirTabatabaei}, \bibinfo{person}{Noah~E Friedkin}, {and}
  \bibinfo{person}{Francesco Bullo}.} \bibinfo{year}{2015}\natexlab{}.
\newblock \showarticletitle{Opinion dynamics and the evolution of social power
  in influence networks}.
\newblock \bibinfo{journal}{\emph{SIAM {Review}}} \bibinfo{volume}{57},
  \bibinfo{number}{3} (\bibinfo{year}{2015}), \bibinfo{pages}{367--397}.
\newblock


\bibitem[Kunegis(2013)]%
        {kunegis2013konect}
\bibfield{author}{\bibinfo{person}{J{\'e}r{\^o}me Kunegis}.}
  \bibinfo{year}{2013}\natexlab{}.
\newblock \showarticletitle{Konect: the koblenz network collection}. In
  \bibinfo{booktitle}{\emph{Proceedings of the 22nd international conference on
  world wide web}}. \bibinfo{pages}{1343--1350}.
\newblock


\bibitem[Kyng and Sachdeva(2016)]%
        {kyng2016approximate}
\bibfield{author}{\bibinfo{person}{Rasmus Kyng} {and} \bibinfo{person}{Sushant
  Sachdeva}.} \bibinfo{year}{2016}\natexlab{}.
\newblock \showarticletitle{Approximate Gaussian elimination for Laplacians -
  Fast, sparse, and simple}. In \bibinfo{booktitle}{\emph{2016 IEEE 57th Annual
  Symposium on Foundations of Computer Science (FOCS)}}.
  \bibinfo{pages}{573--582}.
\newblock


\bibitem[Ledford(2020)]%
        {Le20}
\bibfield{author}{\bibinfo{person}{Heidi Ledford}.}
  \bibinfo{year}{2020}\natexlab{}.
\newblock \showarticletitle{How Facebook, Twitter and other data troves are
  revolutionizing social science}.
\newblock \bibinfo{journal}{\emph{Nature}} \bibinfo{volume}{582},
  \bibinfo{number}{7812} (\bibinfo{year}{2020}), \bibinfo{pages}{328--330}.
\newblock


\bibitem[Leskovec and Sosi{\v{c}}(2016)]%
        {leskovec2016snap}
\bibfield{author}{\bibinfo{person}{Jure Leskovec} {and} \bibinfo{person}{Rok
  Sosi{\v{c}}}.} \bibinfo{year}{2016}\natexlab{}.
\newblock \showarticletitle{Snap: A general-purpose network analysis and
  graph-mining library}.
\newblock \bibinfo{journal}{\emph{ACM Transactions on Intelligent Systems and
  Technology}} \bibinfo{volume}{8}, \bibinfo{number}{1} (\bibinfo{year}{2016}),
  \bibinfo{pages}{1--20}.
\newblock


\bibitem[Li et~al\mbox{.}(2019)]%
        {li2019current}
\bibfield{author}{\bibinfo{person}{Huan Li}, \bibinfo{person}{Richard Peng},
  \bibinfo{person}{Liren Shan}, \bibinfo{person}{Yuhao Yi}, {and}
  \bibinfo{person}{Zhongzhi Zhang}.} \bibinfo{year}{2019}\natexlab{}.
\newblock \showarticletitle{Current flow group closeness centrality for complex
  networks?}. In \bibinfo{booktitle}{\emph{Proceedings of the World Wide Web
  Conference}}. \bibinfo{pages}{961--971}.
\newblock


\bibitem[Matakos et~al\mbox{.}(2017)]%
        {matakos2017measuring}
\bibfield{author}{\bibinfo{person}{Antonis Matakos}, \bibinfo{person}{Evimaria
  Terzi}, {and} \bibinfo{person}{Panayiotis Tsaparas}.}
  \bibinfo{year}{2017}\natexlab{}.
\newblock \showarticletitle{Measuring and moderating opinion polarization in
  social networks}.
\newblock \bibinfo{journal}{\emph{Data Mining and Knowledge Discovery}}
  \bibinfo{volume}{31} (\bibinfo{year}{2017}), \bibinfo{pages}{1480--1505}.
\newblock


\bibitem[Musco et~al\mbox{.}(2018)]%
        {musco2018minimizing}
\bibfield{author}{\bibinfo{person}{Cameron Musco}, \bibinfo{person}{Christopher
  Musco}, {and} \bibinfo{person}{Charalampos~E Tsourakakis}.}
  \bibinfo{year}{2018}\natexlab{}.
\newblock \showarticletitle{Minimizing polarization and disagreement in social
  networks}. In \bibinfo{booktitle}{\emph{Proceedings of the 2018 World Wide
  Web Conference}}. \bibinfo{pages}{369--378}.
\newblock


\bibitem[Neumann et~al\mbox{.}(2024)]%
        {neumann2024sublinear}
\bibfield{author}{\bibinfo{person}{Stefan Neumann}, \bibinfo{person}{Yinhao
  Dong}, {and} \bibinfo{person}{Pan Peng}.} \bibinfo{year}{2024}\natexlab{}.
\newblock \showarticletitle{Sublinear-time opinion estimation in the
  Friedkin--Johnsen model}. In \bibinfo{booktitle}{\emph{Proceedings of the ACM
  on Web Conference 2024}}. \bibinfo{pages}{2563–2571}.
\newblock
\showISBNx{9798400701719}


\bibitem[Parsegov et~al\mbox{.}(2016)]%
        {parsegov2016novel}
\bibfield{author}{\bibinfo{person}{Sergey~E Parsegov}, \bibinfo{person}{Anton~V
  Proskurnikov}, \bibinfo{person}{Roberto Tempo}, {and} \bibinfo{person}{Noah~E
  Friedkin}.} \bibinfo{year}{2016}\natexlab{}.
\newblock \showarticletitle{Novel multidimensional models of opinion dynamics
  in social networks}.
\newblock \bibinfo{journal}{\emph{IEEE Trans. Automat. Control}}
  \bibinfo{volume}{62}, \bibinfo{number}{5} (\bibinfo{year}{2016}),
  \bibinfo{pages}{2270--2285}.
\newblock


\bibitem[Proskurnikov and Tempo(2017)]%
        {PrTe17}
\bibfield{author}{\bibinfo{person}{Anton~V Proskurnikov} {and}
  \bibinfo{person}{Roberto Tempo}.} \bibinfo{year}{2017}\natexlab{}.
\newblock \showarticletitle{A tutorial on modeling and analysis of dynamic
  social networks. Part I}.
\newblock \bibinfo{journal}{\emph{Annual Reviews in Control}}
  \bibinfo{volume}{43} (\bibinfo{year}{2017}), \bibinfo{pages}{65--79}.
\newblock


\bibitem[Ravazzi et~al\mbox{.}(2014)]%
        {ravazzi2014ergodic}
\bibfield{author}{\bibinfo{person}{Chiara Ravazzi}, \bibinfo{person}{Paolo
  Frasca}, \bibinfo{person}{Roberto Tempo}, {and} \bibinfo{person}{Hideaki
  Ishii}.} \bibinfo{year}{2014}\natexlab{}.
\newblock \showarticletitle{Ergodic randomized algorithms and dynamics over
  networks}.
\newblock \bibinfo{journal}{\emph{IEEE Transactions on Control of Network
  Systems}} \bibinfo{volume}{2}, \bibinfo{number}{1} (\bibinfo{year}{2014}),
  \bibinfo{pages}{78--87}.
\newblock


\bibitem[Rossi and Ahmed(2015)]%
        {rossi2015network}
\bibfield{author}{\bibinfo{person}{Ryan Rossi} {and} \bibinfo{person}{Nesreen
  Ahmed}.} \bibinfo{year}{2015}\natexlab{}.
\newblock \showarticletitle{The network data repository with interactive graph
  analytics and visualization}.
\newblock \bibinfo{journal}{\emph{Proceedings of the AAAI Conference on
  Artificial Intelligence}} \bibinfo{volume}{29}, \bibinfo{number}{1}
  (\bibinfo{year}{2015}).
\newblock


\bibitem[Semonsen et~al\mbox{.}(2018)]%
        {semonsen2018opinion}
\bibfield{author}{\bibinfo{person}{Justin Semonsen},
  \bibinfo{person}{Christopher Griffin}, \bibinfo{person}{Anna Squicciarini},
  {and} \bibinfo{person}{Sarah Rajtmajer}.} \bibinfo{year}{2018}\natexlab{}.
\newblock \showarticletitle{Opinion dynamics in the presence of increasing
  agreement pressure}.
\newblock \bibinfo{journal}{\emph{IEEE Transactions on Cybernetics}}
  \bibinfo{volume}{49}, \bibinfo{number}{4} (\bibinfo{year}{2018}),
  \bibinfo{pages}{1270--1278}.
\newblock


\bibitem[Sun and Zhang(2023)]%
        {sunOpinionOptimizationDirected2023}
\bibfield{author}{\bibinfo{person}{Haoxin Sun} {and} \bibinfo{person}{Zhongzhi
  Zhang}.} \bibinfo{year}{2023}\natexlab{}.
\newblock \showarticletitle{Opinion optimization in directed social networks}.
\newblock \bibinfo{journal}{\emph{Proceedings of the AAAI Conference on
  Artificial Intelligence}} \bibinfo{volume}{37}, \bibinfo{number}{4}
  (\bibinfo{year}{2023}), \bibinfo{pages}{4623--4632}.
\newblock


\bibitem[Tu and Neumann(2022)]%
        {tu2022viral}
\bibfield{author}{\bibinfo{person}{Sijing Tu} {and} \bibinfo{person}{Stefan
  Neumann}.} \bibinfo{year}{2022}\natexlab{}.
\newblock \showarticletitle{A viral marketing-based model for opinion dynamics
  in online social networks}. In \bibinfo{booktitle}{\emph{Proceedings of the
  ACM Web Conference 2022}}. \bibinfo{pages}{1570--1578}.
\newblock


\bibitem[Tu et~al\mbox{.}(2023)]%
        {tu2023adversaries}
\bibfield{author}{\bibinfo{person}{Sijing Tu}, \bibinfo{person}{Stefan
  Neumann}, {and} \bibinfo{person}{Aristides Gionis}.}
  \bibinfo{year}{2023}\natexlab{}.
\newblock \showarticletitle{Adversaries with limited information in the
  Friedkin-Johnsen model}. In \bibinfo{booktitle}{\emph{Proceedings of the 29th
  ACM SIGKDD Conference on Knowledge Discovery and Data Mining}}.
  \bibinfo{pages}{2201--2210}.
\newblock


\bibitem[Xu et~al\mbox{.}(2021)]%
        {xu2021fast}
\bibfield{author}{\bibinfo{person}{Wanyue Xu}, \bibinfo{person}{Qi Bao}, {and}
  \bibinfo{person}{Zhongzhi Zhang}.} \bibinfo{year}{2021}\natexlab{}.
\newblock \showarticletitle{Fast evaluation for relevant quantities of opinion
  dynamics}. In \bibinfo{booktitle}{\emph{Proceedings of the Web Conference
  2021}}. \bibinfo{pages}{2037–2045}.
\newblock
\showISBNx{9781450383127}


\bibitem[Young(1954)]%
        {young1954iterative}
\bibfield{author}{\bibinfo{person}{David Young}.}
  \bibinfo{year}{1954}\natexlab{}.
\newblock \showarticletitle{Iterative methods for solving partial difference
  equations of elliptic type}.
\newblock \bibinfo{journal}{\emph{Trans. Amer. Math. Soc.}}
  \bibinfo{volume}{76}, \bibinfo{number}{1} (\bibinfo{year}{1954}),
  \bibinfo{pages}{92--111}.
\newblock


\bibitem[Zhu et~al\mbox{.}(2021)]%
        {zhu2021minimizing}
\bibfield{author}{\bibinfo{person}{Liwang Zhu}, \bibinfo{person}{Qi Bao}, {and}
  \bibinfo{person}{Zhongzhi Zhang}.} \bibinfo{year}{2021}\natexlab{}.
\newblock \showarticletitle{Minimizing polarization and disagreement in social
  networks via link recommendation}.
\newblock \bibinfo{journal}{\emph{Advances in Neural Information Processing
  Systems}}  \bibinfo{volume}{34} (\bibinfo{year}{2021}),
  \bibinfo{pages}{2072--2084}.
\newblock


\bibitem[Zhu and Zhang(2022)]%
        {zhu2022nearly}
\bibfield{author}{\bibinfo{person}{Liwang Zhu} {and} \bibinfo{person}{Zhongzhi
  Zhang}.} \bibinfo{year}{2022}\natexlab{}.
\newblock \showarticletitle{A nearly-linear time algorithm for minimizing risk
  of conflict in social networks}. In \bibinfo{booktitle}{\emph{Proceedings of
  the 28th ACM SIGKDD Conference on Knowledge Discovery and Data Mining}}.
  \bibinfo{pages}{2648--2656}.
\newblock


\end{thebibliography}

\appendix
\balance
    
    \begin{algorithm}
        \caption{$\textsc{ForestSample}(\mathcal{G},\sss, l)$}
        \label{alg:sampleforest}
        \Input{
            a digraph $\mathcal{G}$, a internal opinion vector $\sss$, number of generated spanning forests $l$
        }
        \Output{Estimated equilibrium opinion vector $\hat{\zz}$}
        
        {
        \textbf{Initialize} :
            $\hat{\zz} \leftarrow \zero$ \\
        }
        
        \For{$t = 1$ \KwTo $l$}{
            RootIndex $\leftarrow$ \textsc{RandomForest}($\mathcal{G}$)\\
            \For{$i = 1$ \KwTo $n$}{
                $u \leftarrow$ RootIndex[$i$]\\
                $\hat{z}_i = \hat{z}_i + s_u$\\
            }
        }
        $\hat{z}_u \leftarrow \hat{z}_u/l,\forall u \in V$\\

        \Return $\hat{\zz}$\\
    \end{algorithm}
\section{Forest Sampling in Directed Graphs}\label{appdix:direct}

    In a directed graph, due to the fact that $\II+\LL$ dose not satisfy the property of positive semi-definiteness, we cannot use the Laplacian solver to solve the problem. The authors in ~\cite{sunOpinionOptimizationDirected2023} introduce a heuristic approach to efficiently compute the average expressed opinion in directed social networks, particularly relevant for scenarios aiming to influence public opinion. By extending Wilson's algorithm for generating uniform spanning trees, the authors devise a method to sample spanning converging forests. This technique allows for the estimation of the average expressed opinion without the need to directly compute the inverse of the forest matrix. The sampling method not only alleviates computational burden but also enhances scalability to massive graphs comprising over twenty million nodes. As a variant of the original algorithm, algorithm \textsc{ForestSample}, detailed in Algorithm ~\ref{alg:sampleforest}, can solve the expressed opinion vector, which has a linear time complexity $O (ln)$ relative to the sampling number $l$, thereby providing a practical and efficient solution for analyzing large-scale social networks.

\section{Additional Experiments}

\subsection{Additional Experiments in Undirected Graphs}\label{appdix:exp_undi}
    \textbf{Experiment Settings.} In this study, we investigate the impact of different values of the relaxation factor $\omega$ on the number of updates performed by the \textsc{BLISOR} algorithm, and the results are presented in Figure~\ref{fig:wpara}. The experiments are conducted on four real-world datasets: (a) Email-Enron, (b) DBLP, (c) YoutubeSnap, and (d) Flixster. Our results demonstrate that the choice of $\omega$ significantly influences the convergence behavior of \textsc{BLISOR}. Specifically, for undirected graphs, the optimal performance is achieved when $\omega$ is around 1.5. At this value, the number of updates required for convergence is minimized, indicating a balance between computational efficiency and algorithmic stability.

\begin{table*}[t]
    \caption{Relative error for estimated $\zz, \Controversy,\Disagreement,\Internal,\Polarization$ under exponential distribution, comparing \textsc{RWB}, \textsc{BLI}, and \textsc{BLISOR}.}.
    \centering
    \begin{adjustbox}{max width=\textwidth}
    \label{tab:measure1}
    \begin{tabular}{c ccccc ccccc ccccc}
    \toprule
        \multirow{3}{*}{Dataset}
        &\multicolumn{15}{c}{Relative error of equilibrium opinion and four estimated quantities for three algorithms in exponential distribution($\times 10^{-2}$)}\\
    \cmidrule(lr){2-16}
    & \multicolumn{5}{c}{\textsc{RWB}} & \multicolumn{5}{c}{BLI}&\multicolumn{5}{c}{BLISOR}\\
    \cmidrule(lr){2-6}
    \cmidrule(lr){7-11}
    \cmidrule(lr){12-16}
    & $\zz$ &$\Controversy$&$\Disagreement$&$\Internal$&$\Polarization$
    & $\zz$ &$\Controversy$&$\Disagreement$&$\Internal$&$\Polarization$
    & $\zz$ &$\Controversy$&$\Disagreement$&$\Internal$&$\Polarization$\\
    \midrule
    Hamsterster  & 7.0152 & 0.2671 & 0.2550 & 0.1778 & 0.1731
                 & 0.8132 & 0.9480 & 1.0459 & 0.1639 & 0.9838
                 & 0.6241 & 0.2977 & 0.3390 & 0.0528 & 0.1896\\
    Email-Enron  & 8.8191 & 0.7693 & 3.7051 & 1.6334 & 2.6014
                 & 0.8376 & 0.9827 & 1.0561 & 0.3269 & 1.0798
                 & 0.8098 & 0.1691 & 0.3407 & 0.1003 & 0.4291\\
    DBLP         & 7.5720 & 1.7260 & 10.8969 & 3.4766 & 7.9725
                 & 0.8665 & 0.9669 & 1.0092 & 0.2893 & 0.9862
                 & 0.7795 & 0.1145 & 0.0730 & 0.0220 & 0.0286\\
    YoutubeSnap  & 9.4665 & 0.8511 & 0.9242 & 0.1219 & 3.8246
                 & 0.8841 & 0.9437 & 0.9610 & 0.4827 & 0.9461
                 & 0.8226 & 0.1491 & 0.3558 & 0.1710 & 0.3486\\
    Flixster     & 8.3171 & 0.6716 & 2.1735 & 0.5375 & 1.2377
                 & 0.8664 & 0.9584 & 0.9910 & 0.5353 & 0.9790
                 & 0.8096 & 0.1980 & 0.6314 & 0.3413 & 0.7434\\
    LiveJournal  & 11.8273 & 0.1405 & 1.1572 & 1.0840 & 2.8327 
                 & 0.8853 & 0.9668 & 1.0179 & 0.2338 & 1.0401
                 & 0.8379 & 0.3051 & 0.2509 & 0.0577 & 0.1723\\
    \bottomrule
    \end{tabular}
    
    \end{adjustbox}
\end{table*}

\begin{table*}[t] 
    \caption{Relative error for estimated $\zz, \Controversy,\Disagreement,\Internal,\Polarization$ under power-law distribution, comparing \textsc{RWB}, \textsc{BLI}, and \textsc{BLISOR}.}.
    \centering
    \begin{adjustbox}{max width=\textwidth}
    \label{tab:measure2}
    \begin{tabular}{c ccccc ccccc ccccc}
    \toprule
        \multirow{3}{*}{Dataset}
        &\multicolumn{15}{c}{Relative error of equilibrium opinion and four estimated quantities for three algorithms in power-law distribution($\times 10^{-2}$)}\\
    \cmidrule(lr){2-16}
    & \multicolumn{5}{c}{\textsc{RWB}} & \multicolumn{5}{c}{BLI}&\multicolumn{5}{c}{BLISOR}\\
    \cmidrule(lr){2-6}
    \cmidrule(lr){7-11}
    \cmidrule(lr){12-16}
    & $\zz$ &$\Controversy$&$\Disagreement$&$\Internal$&$\Polarization$
    & $\zz$ &$\Controversy$&$\Disagreement$&$\Internal$&$\Polarization$
    & $\zz$ &$\Controversy$&$\Disagreement$&$\Internal$&$\Polarization$\\
    \midrule
    Hamsterster  & 24.2161 & 4.0837 & 8.1190 & 7.4877 & 6.5275
                 & 0.8396 & 1.2290 & 1.5415 & 0.3793 & 1.3705
                 & 0.7682 & 0.7900 & 1.3841 & 0.3403 & 1.0176\\
    Email-Enron  & 45.7143 & 30.0040 & 33.2984 & 25.2325 & 37.8610
                 & 0.8901 & 0.9309 & 1.0011 & 0.3476 & 0.9142
                 & 0.8685 & 0.4781 & 0.3614 & 0.1265 & 0.6310\\
    DBLP         & 57.0701 & 44.2895 & 44.1846 & 33.7844 & 51.5504
                 & 0.8995 & 1.0322 & 0.9264 & 0.2059 & 1.0391
                 & 0.8144 & 0.1964 & 0.1506 & 0.0329 & 0.2449\\
    YoutubeSnap  & 345.8933 & 91.7873 & 98.7708 & 96.4878 & 93.4981
                 & 0.9073 & 0.7815 & 0.7647 & 0.4100 & 0.7781
                 & 0.8729 & 0.4456 & 0.5200 & 0.2763 & 0.4497\\
    Flixster     & 1359.7915 & 92.5764 & 97.3794 & 98.7343 & 93.3010
                 & 0.9071 & 0.5682 & 0.5644 & 0.1903 & 0.5650
                 & 0.8786 & 1.6239 & 1.6370 & 0.5541 & 1.6329\\
    LiveJournal  & 131.0797 & 79.9158 & 85.1028 & 87.1971 & 85.0056 
                 & 0.9074 & 0.8705 & 0.9397 & 0.2287 & 0.8640
                 & 0.8989 & 0.3121 & 0.3707 & 0.0904 & 0.3108\\
    \bottomrule
    \end{tabular}
    
    \end{adjustbox}
\end{table*}

    \textbf{Accuracy.} We evaluate the relative error for estimated quantities $\zz$, $\Controversy$, $\Disagreement$, $\Internal$, and $\Polarization$ under both exponential and power-law distributions, comparing the performance of \textsc{RWB}, \textsc{BLI}, and \textsc{BLISOR}. The results, presented in Table~\ref{tab:measure1} and Table~\ref{tab:measure2}, demonstrate that \textsc{BLI} and \textsc{BLISOR} achieve highly accurate and stable performance, with all relative errors below $0.02$. This indicates the effectiveness and robustness of our algorithms in estimating the relevant quantities across different distributions. In contrast, \textsc{RWB} exhibits inconsistent performance, particularly under the power-law distribution. The current settings for step size and simulation count in \textsc{RWB} are insufficient to accurately estimate the quantities, leading to significantly poorer results. This highlights the limitations of \textsc{RWB} in handling complex distributions and further underscores the superiority of our proposed algorithms.
\begin{figure}[t]
    \centering
    \includegraphics[width=\linewidth]{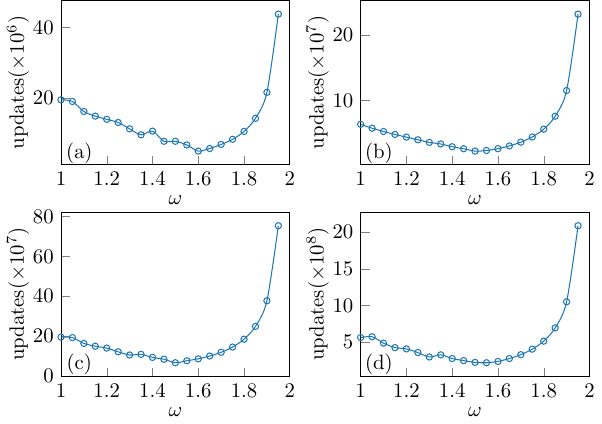}
    \caption{Impact of different values of relaxation factor $\omega$ on the number of updates performed by the \textsc{BLISOR}. The experiments are conducted on four real-world datasets: (a) Email-Enron, (b) DBLP, (c) YoutubeSnap and (d) Flixster. }
    \label{fig:wpara}
\end{figure}

\begin{table*}[t]
    \caption{Statistics of real-world directed networks used in our experiments and comparison of running time (seconds) between \textsc{FS} (\textsc{ForestSample}), \textsc{BLI} and \textsc{BLISOR}. \textsc{FS} are evaluated only under uniform distribution (Unif), while \textsc{BLI} and \textsc{BLISOR} are tested under Unif, Exp (exponential distribution), and Pareto (power-law distribution).} 
    \label{tab:direct-running-times}
    \centering
    \begin{tabular}{cccc c ccc ccc c}
    \toprule
        \multicolumn{4}{c}{Dataset}
        & \multicolumn{7}{c}{Time}
        & \multirow{3}{*}{$\omega$}\\
    \cmidrule(lr){1-4}
    \cmidrule(lr){5-11}
        \multirow{2}{*}{name} & \multirow{2}{*}{$n$} & \multirow{2}{*}{$m$} & \multirow{2}{*}{$d_{\max}$}
        & \multirow{2}{*}{\textsc{FS}}
        & \multicolumn{3}{c}{\textsc{BLI}}
        & \multicolumn{3}{c}{\textsc{BLISOR}} & \\
    \cmidrule(lr){6-8}
    \cmidrule(lr){9-11}
    &&&&& Unif & Exp & Pareto
    & Unif & Exp & Pareto &\\
    
    \midrule
    
    Wiki-Vote & 7,115 & 103,689 & 893
    & 0.33 
    & 3e-3 & 4e-3 & 4e-3
    & 2e-3 & 2e-3 & 3e-3
    & 1.2\\
    DBLP & 12,590 & 49,744 & 617
    & 0.53
    & 1e-3 & 1e-3 & 1e-3
    & - & - & -
    & 1 \\
    Twitterlists & 33,101 & 23,370 & 238
    & 0.94
    & 1e-3 & 4e-3 & 1e-3
    & - & - & -
    & 1\\
    web-Stanford  & 281,903 & 2,312,497 & 255
    & 22.3 
    & 0.49  & 0.54 & 0.57
    & 0.26  & 0.30 & 0.31
    & 1.2 \\
    Google & 875,713 & 5,105,039 & 456 
    & 74.77 
    & 1.71 & 1.73 & 1.85
    & 0.90 & 0.91 & 0.92
    & 1.15\\
    Pokec & 1,632,803 & 30,622,564 & 8,763
    & 329.95 
    & 30.59 & 27.96 & 30.75
    & 11.32 & 10.51 & 11.09
    & 1.35 \\
    Wiki-Talk & 2,394,385 & 5,021,410 & 100,022
    & 78.59 
    & 0.18 & 0.18 & 0.22
    & - & - & -
    &1\\
    LiveJournal & 4,847,571 & 68,993,773 & 20,293
    & 873.00  
    & 64.43 & 65.72 & 71.79
    & 28.74 & 29.92 & 31.38
    & 1.3 \\
    Twitter & 41,652,230 & 1,468,365,182 & 770,155
    & 5766.83 
    & 2631.33 & 2744.95 & 2843.71
    & 1265.18 & 1243.72 & 1265.01
    & 1.3\\
    \bottomrule
    \end{tabular}
\end{table*}

\subsection{Additional Experiments in Directed Graphs} \label{appdix:exp_di}
In this subsection, we present additional experimental results related to directed graphs, and note that the theoretical guarantee about error also applies to directed graphs. We conducted experiments using 9 directed graph networks with node numbers ranging from 7115 to tens of millions.  The detailed information is shown in the first 4 columns of the Table~\ref{tab:direct-running-times}.
For our algorithms \textsc{BLI} and \textsc{BLISOR}, we use the same configuration as undirected graphs. For the baseline algorithm \textsc{ForestSample}  shown in Algorithm~\ref{alg:sampleforest}, we set the sampling quantity $l = 2000$ and show the relative error of \textsc{ForestSample} on small graphs. 

\begin{table}[tb]
    \centering
    \caption{Max relative error for estimated equilibrium opinions under Unif (uniform distribution), Exp (exponential distribution), and Pareto (power-law distribution), comparing \textsc{FS} (\textsc{ForestSample}) and \textsc{BLI}.}
    \label{tab:my_label}
    \begin{tabular}{c ccc ccc}
         \toprule
         \multirow{3}{*}{Dataset}& \multicolumn{6}{c}{Max relative error($\times 10^{-2}$)}\\
         \cmidrule(lr){2-7}
         & \multicolumn{3}{c}{\textsc{FS}} & \multicolumn{3}{c}{\textsc{BLI}}\\
         \cmidrule(lr){2-4}
         \cmidrule(lr){5-7}
         & Unif & Exp & Pareto
         & Unif & Exp & Pareto\\
         \midrule
         Wiki-Vote & 7.576 & 9.999 & 88.083
         & 0.376 & 0.435 & 0.475\\
         DBLP & 5.083 & 9.008 & 45.125
         & 0.229 & 0.184 & 0.427\\
         Twitterlists & 5.057 & 6.335 & 57.644
         & 0.114 & 0.079 & 0.069\\
         \bottomrule
    \end{tabular}
\end{table}
    \textbf{Efficiency.} We present the statistics of real-world directed networks used in our experiments and compare the running time of \textsc{ForestSample}, \textsc{BLI}, and \textsc{BLISOR}. The results, summarized in Table~\ref{tab:direct-running-times}, demonstrate that \textsc{BLI} and \textsc{BLISOR} significantly outperform \textsc{ForestSample} in terms of computational efficiency. Notably, when $\omega = 1$, \textsc{BLISOR} reduces to \textsc{BLI}, and thus the results for \textsc{BLISOR} are not shown separately in this case. Across all tested distributions, \textsc{BLI} and \textsc{BLISOR} exhibit faster running times and greater scalability compared to \textsc{ForestSample}, highlighting their superior performance in handling large-scale directed networks. These results underscore the advantages of our proposed algorithms in practical applications.

    \textbf{Accuracy. } We compare the maximum relative error for the estimated equilibrium opinion between \textsc{FS} and \textsc{BLI} under three internal distributions. The results, presented in a table, show that \textsc{BLI} achieves significantly higher accuracy than \textsc{FS}. Specifically, the maximum relative error of \textsc{BLI} is consistently below $0.005$ across all distributions, demonstrating its robustness and precision. In contrast, \textsc{FS} exhibits larger errors, highlighting its limitations in accurately estimating equilibrium opinions. These findings underscore the superior performance of \textsc{BLI} in handling diverse internal distributions, making it a more reliable choice for such tasks. 
\end{document}